\newcommand{\NP}{\ensuremath{\mathbb{NP}}}
\newtheorem{algo}{Algorithm}
\title{Efficient Dominating and Edge Dominating Sets for Graphs and Hypergraphs}
\author{
Andreas Brandst\"adt
\inst{1}
\and
Arne Leitert
\inst{1}
\and
Dieter Rautenbach
\inst{2}
}
\institute{
Institut f\"ur Informatik, Universit\"at Rostock, Germany.\\
\email{ab@informatik.uni-rostock.de, arne.leitert@uni-rostock.de}\medskip
\and
Institut f\"ur Optimierung und Operations Research, Universit\"at Ulm, Germany\\
\email{dieter.rautenbach@uni-ulm.de}
}
\begin{document}

\maketitle

\begin{abstract}
Let $G=(V,E)$ be a graph. A vertex {\em dominates} itself and all its neighbors, i.e., every vertex $v \in V$ dominates its closed neighborhood $N[v]$. 
A vertex set $D$ in $G$ is an {\em efficient dominating} ({\em e.d.}) set for $G$ if for every vertex $v \in V$, there is exactly one $d \in D$ dominating $v$. An edge set $M \subseteq E$ is an {\em efficient edge dominating} ({\em e.e.d.}) set for $G$ if it is an efficient dominating set in the line graph $L(G)$ of $G$. The ED problem (EED problem, respectively)  asks for the existence of an e.d. set (e.e.d. set, respectively) in the given graph.  

We give a unified framework for investigating the complexity of these problems on various classes of graphs. In particular, we solve some open problems and give linear time algorithms for ED and EED on dually chordal graphs. 

%It follows from our unified framework and known results that both problems are solvable in polynomial time on AT-free graphs which also solves an open problem for EED on permutation graphs.  

We extend the two problems to hypergraphs and show that ED remains \NP-complete on $\alpha$-acyclic hypergraphs, and is solvable in polynomial time on hypertrees, while EED is polynomial on $\alpha$-acyclic hypergraphs and \NP-complete on hypertrees. 
\end{abstract}

\noindent{\small\textbf{Keywords}:
efficient domination;
efficient edge domination;
graphs and hypergraphs;
polynomial time algorithms
}

\section{Introduction and Basic Notions}\label{sec:intro}

Packing and covering problems in graphs and their relationships belong to the most fundamental topics in combinatorics and graph algorithms and have a wide spectrum of applications in computer science, operations research and many other fields. Recently, there has been an increasing interest in problems combining packing and covering properties. Among them, there are the following variants of domination problems: 

Let $G$ be a finite simple undirected graph with vertex set $V$ and edge set $E$. 
%A vertex set $U \subseteq V$ is {\em independent} in $G$ if for all $x,y \in U$, $xy \not\in E$ holds. 
A vertex {\em dominates} itself and all its neighbors, i.e., every vertex $v \in V$ dominates its closed neighborhood $N[v]=\{u \mid u=v$ or $uv \in E\}$. 
A vertex set $D$ in $G$ is an {\em efficient dominating} ({\em e.d.}) set for $G$ if for every vertex $v \in V$, there is exactly one $d \in D$ dominating $v$ 
(sometimes called {\em independent perfect dominating set}) \cite{BanBarSla1988,BanBarHosSla1996}.
An edge set $M \subseteq E$ is an {\em efficient edge dominating} ({\em e.e.d.}) set for $G$ if it is an efficient dominating set in the line graph $L(G)$ of 
$G$ (sometimes called {\em dominating induced matching}).
The ED problem (EED problem, respectively) asks for the existence of an e.d. set (e.e.d. set, respectively) in the given graph. Note that both problems are \NP-complete. The complexity of ED (EED, respectively) (and their variants) with respect to special graph classes was studied in various papers; see e.g. \cite{BanBarHosSla1996,Biggs1973,ChaPanCoo1995,LiaLuTan1997,Lin1998,LuTan2002,Milan2011,Yen1992,YenLee1996} for ED and  
\cite{BraHunNev2010,BraMos2011,CarKorLoz2011,GriSlaSheHol1993,LuKoTan2002,LuTan1998} for EED.
The main contributions of our paper are:  
\begin{enumerate}[$(i)$]
\item a unified framework for the ED and EED problems solving some open questions,
\item linear time algorithms for ED and EED on dually chordal graphs, and
\item an extension of the two problems to hypergraphs, in particular to $\alpha$-acyclic hypergraphs and hypertrees: We show that ED remains \NP-complete on $\alpha$-acyclic hypergraphs, and is solvable in polynomial time on hypertrees, while EED is polynomial on $\alpha$-acyclic hypergraphs and \NP-complete on hypertrees. 
\end{enumerate}
Our approach has the advantage that it unifies the proofs of various results obtained in numerous papers. Our proofs are typically very short since we extensively use some theoretical background on the relations of the considered graph and hypergraph classes, and in particular closure properties of graph classes with respect to squares of their graphs, and polynomial time algorithms for Maximum Weight Independent Set and 
Minimum Weight Dominating Set on some graph classes. 
The consequences are some new cases where the corresponding problems can be efficiently solved. 
 
%Due to space limitation, all proofs are given in an Appendix.

\section{Further Basic Notions}\label{sec:basicnotions}

\subsection{Basic notions and properties of graphs}

Let $G$ be a finite undirected graph without loops and multiple edges. Let $V$ denote its vertex (or node) set and $E$ its edge set; let $|V|=n$ and $|E|=m$. A vertex $v$ is \emph{universal} in $G$ if it is adjacent to all other vertices of $G$.
A {\em chordless path} $P_k$ ({\em chordless cycle} $C_k$, respectively) has $k$ vertices, say $v_1,\ldots,v_k$, and edges $v_iv_{i+1}$, $1 \le i \le k-1$ (and $v_kv_1$, respectively). A {\em hole} is a chordless cycle $C_k$ for $k \ge 5$. $G$ is {\em chordal} if no induced subgraph of $G$ is isomorphic to $C_k$, $k \ge 4$. See e.g. \cite{BraLeSpi1999} for the many facets of chordal graphs. 
A vertex set $U \subseteq V$ is {\em independent} if for all $x,y \in U$, $xy \notin E$ holds. 
For a graph $G$ and a vertex weight function on $G$, let the {\sc Maximum Weight Independent Set} (MWIS) problem be the task of finding an independent vertex set of maximum weight.

Let $K_i$ denote the clique with $i$ vertices. 
Let $K_4-e$ or {\em diamond} be the graph with four vertices and five edges, say vertices $a,b,c,d$ and edges $ab,ac,bc,bd,cd$; its {\em mid-edge} is the edge $bc$.
Let {\em gem} denote the graph consisting of five vertices, four of which induce a $P_4$, and the fifth is universal.  
Let $W_4$ denote the graph with five vertices consisting of a $C_4$ and a universal vertex. 

For $U \subseteq V$, let $G[U]$ denote the subgraph induced by $U$.
For a set ${\cal F}$ of graphs, a graph $G$ is called {\em ${\cal F}$-free} if $G$ contains no induced subgraph from ${\cal F}$. Thus, it is {\em hole-free} if it contains no induced subgraph isomorphic to a hole. A graph $G$ is {\em weakly chordal} if $G$ and its complement graph are hole-free.  
Three pairwise non-adjacent distinct vertices form an {\em asteroidal triple} ({\em AT}) in $G$ if for each choice of two of them, there is a path between the two avoiding the neighborhood of the third. A graph $G$ is {\em AT-free} if $G$ contains no AT. 

\subsection{Basic notions and properties of hypergraphs}
Throughout this paper, a hypergraph $H=(V,{\cal E})$ has a finite vertex set $V$ and for all $e \in {\cal E}$, $e \subseteq V$ (${\cal E}$ possibly being a multiset).
For a graph $G$, let ${\cal N}(G)$ denote the {\em closed neighborhood hypergraph}, i.e., ${\cal N}(G)=(V,\{N[v] \mid v \in V\})$, and let ${\cal C}(G)$ denote the {\em clique hypergraph} consisting of the inclusion-maximal cliques of $G$. 

A subset of edges ${\cal E}' \subseteq {\cal E}$ is an {\em exact cover} of $H$ if for all $e,f \in {\cal E}'$ with $e \neq f$, $e \cap f = \emptyset$ and $\bigcup {\cal E}'=V$. 
The {\sc Exact Cover} problem asks for the existence of an exact cover in a given hypergraph $H$. It is well known that this problem is \NP-complete even for 3-elementary hyperedges (problem X3C [SP2] in \cite{GarJoh1979}). Thus, the ED problem on a graph $G$ is the same as the Exact Cover problem on  ${\cal N}(G)$. 

For defining the class of dually chordal graphs, whose properties will be contrasted with those of chordal graphs, as well as for extending the ED and the EED problems to hypergraphs, we need some basic definitions: 
For a hypergraph $H=(V,{\cal E})$, let $2sec(H)$ denote its {\em $2$-section} (also called {\em representative} or {\em primal}) graph, i.e., the graph having the same vertex set $V$ in which two vertices are adjacent if they are in a common hyperedge.  
Let $L(H)$ denote the line graph of $H$, i.e., the graph with the hyperedges ${\cal E}$ as its vertex set in which two hyperedges are adjacent in $L(H)$ if they intersect each other. 

A hypergraph $H=(V,{\cal E})$ has the {\em Helly property} if the total intersection of every pairwise intersecting family of hyperedges of ${\cal E}$ is nonempty.  
$H$ is {\em conformal} if every clique of the 2-section graph $2sec(H)$ is contained in a hyperedge of ${\cal E}$   
(see e.g. \cite{Berge1973,Fagin1983}).   
    
The notion of {\em $\alpha$-acyclicity} \cite{Fagin1983} is one of the most important and most frequently studied hypergraph notions. Among the many equivalent conditions describing $\alpha$-acyclic hypergraphs, we take the following: 
For a hypergraph $H=(V,{\cal E})$, a tree $T$ with node set ${\cal E}$ and edge set $E_T$ is a {\em join tree} of $H$ if for all vertices $v \in V$, 
the set of hyperedges containing $v$ induces a subtree of $T$. $H$ is {\em $\alpha$-acyclic} if it has a join tree. 
For a hypergraph $H=(V,{\cal E})$ and vertex $v \in V$, let ${\cal E}_v:=\{e \in {\cal E} \mid v \in e\}$. 
Let $H^*:=({\cal E}, \{{\cal E}_v \mid v \in V\})$ be the {\em dual hypergraph} of $H$.   
$H=(V,{\cal E})$ is a {\em hypertree} if there is a tree $T$ with vertex set $V$ such that for all $e \in {\cal E}$, $T[e]$ is connected. %\cite{BraDraCheVol1998,DraPriChe1992}. 

\begin{theorem}[Duchet, Flament, Slater, see \cite{BraLeSpi1999}]\label{hyptreechar}
$H$ is a hypertree if and only if $H$ has the Helly property and its line graph $L(H)$ is chordal.
\end{theorem}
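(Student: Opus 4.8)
The plan is to treat the two directions separately, reducing each to standard facts about subtrees of a tree, and to pass through the dual hypergraph $H^*$ only for the harder direction. For the forward implication, suppose $H$ is a hypertree with host tree $T$ on $V$, so that each hyperedge $e$ spans the subtree $T[e]$. The family $\{T[e] : e\in\mathcal{E}\}$ then consists of subtrees of a tree, and two such subtrees meet exactly when the corresponding hyperedges intersect; hence $L(H)$ is precisely the intersection graph of this family, and is therefore chordal by the classical theorem of Gavril that intersection graphs of subtrees of a tree are exactly the chordal graphs. The Helly property of $H$ is immediate from the Helly property of subtrees of a tree: any pairwise intersecting subfamily of $\{T[e] : e \in \mathcal{E}\}$ shares a common node of $T$, i.e. the corresponding hyperedges share a common vertex.

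For the converse, assume $H$ has the Helly property and $L(H)$ is chordal; I must produce a host tree on $V$. Here it is cleanest to pass to the dual hypergraph $H^*=(\mathcal{E},\{\mathcal{E}_v : v\in V\})$ and record three observations. First, $2sec(H^*)=L(H)$, since two hyperedges $e,f$ lie in a common hyperedge $\mathcal{E}_v$ of $H^*$ iff some $v$ belongs to both, i.e. iff $e\cap f\neq\emptyset$. Second, $H^*$ is conformal iff $H$ has the Helly property, because a clique of $2sec(H^*)=L(H)$ is a pairwise intersecting family of hyperedges and lies inside some $\mathcal{E}_v$ exactly when those hyperedges have a common vertex. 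Third, unwinding the definition of a join tree of $H^*$, whose nodes are the hyperedges $\mathcal{E}_v$ and are naturally indexed by $V$, shows that such a join tree is the same object as a tree on $V$ in which every $e\in\mathcal{E}$ induces a subtree. Thus it suffices to show that $H^*$ is $\alpha$-acyclic, and by the first two observations the hypotheses give exactly that $H^*$ is conformal with chordal $2$-section.

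The remaining and substantive step is the general fact that a conformal hypergraph $K$ with chordal $2sec(K)$ has a join tree, applied to $K=H^*$. Since $2sec(K)$ is chordal it possesses a clique tree, a tree on its maximal cliques in which, for each vertex, the cliques containing it form a subtree. Every hyperedge of $K$ induces a clique of $2sec(K)$, while conformality forces each maximal clique to sit inside a hyperedge; consequently the maximal cliques of $2sec(K)$ are exactly the inclusion-maximal hyperedges of $K$. I would label the clique-tree nodes by these maximal hyperedges and then graft on the non-maximal hyperedges, attaching each to a node whose hyperedge contains it. The main obstacle is the verification that the attachments can be arranged so that, for every vertex, the set of hyperedges containing it remains a subtree; this running-intersection bookkeeping is precisely what upgrades the clique tree into a genuine join tree, and it is the only delicate point. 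Carrying this out for $K=H^*$ and translating back via the third observation produces a tree on $V$ in which each $e\in\mathcal{E}$ spans a subtree, so $H$ is a hypertree.
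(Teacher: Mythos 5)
The paper does not prove this theorem at all --- it is quoted as a known result of Duchet, Flament and Slater with a pointer to \cite{BraLeSpi1999} --- so there is no internal proof to compare against; what you have written is essentially the standard argument from the literature, and it is sound. The forward direction is complete: $L(H)$ is the intersection graph of the subtrees $T[e]$ of the host tree, hence chordal, and the Helly property of $H$ is inherited from the Helly property of subtrees of a tree. In the converse, the reduction through $H^*$ is the right move, and your direct verification that $H^*$ is conformal precisely when $H$ is Helly is a good choice, since it avoids any appeal to $H^{**}\cong H$.

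The one step you leave unfinished --- grafting the non-maximal hyperedges onto the clique tree of $2sec(K)$ --- is flagged by you as ``the only delicate point,'' but it is in fact routine, and you should close it: attach each non-maximal hyperedge $e$ of $K$ as a leaf to an \emph{arbitrary} clique-tree node $C$ with $e\subseteq C$. For any vertex $x$ of $K$ with $x\in e$ we then have $x\in C$, so $C$ already belongs to the (connected) set of maximal cliques containing $x$, and adding the leaf $e$ keeps the set of nodes containing $x$ connected; hence the extended tree is a join tree of $K$. Two small caveats remain when you translate the join tree of $H^*$ back into a host tree on $V$: distinct vertices $v,w$ of $H$ with $\mathcal{E}_v=\mathcal{E}_w$ are represented by a single node and must be re-expanded (hang $w$ as a leaf off $v$; since $w\in e$ iff $v\in e$, every $T[e]$ stays connected), and vertices lying in no hyperedge may be attached anywhere. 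With these additions your argument is a complete and correct proof.
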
 

The following facts are well known: 

\begin{lemma}\label{basicdualityfacts}
Let $H$ be a hypergraph. 
\begin{enumerate}[$(i)$]
    \item $H$ is conformal if and only if $H^*$ has the Helly property. 
    \item $L(H)$ is isomorphic to $2sec(H^*)$.
\end{enumerate}
\end{lemma}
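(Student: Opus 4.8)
The plan is to prove both parts by directly unwinding the definitions of the dual hypergraph, the $2$-section, and the line graph; neither statement requires a genuinely deep argument, so the goal is simply to set up the correct dictionary between objects in $H$ and objects in $H^*$.

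For part $(ii)$, I would observe that both $L(H)$ and $2sec(H^*)$ have vertex set ${\cal E}$, so it suffices to show that the identity map on ${\cal E}$ is an isomorphism, i.e., that the two graphs have the same adjacency relation. Two hyperedges $e,f \in {\cal E}$ are adjacent in $L(H)$ precisely when $e \cap f \neq \emptyset$. On the other hand, $e$ and $f$ are adjacent in $2sec(H^*)$ precisely when they lie in a common hyperedge of $H^*$, i.e., when there is a vertex $v \in V$ with $e,f \in {\cal E}_v$; by definition of ${\cal E}_v$ this means exactly that $v \in e$ and $v \in f$ for some $v$, which is again equivalent to $e \cap f \neq \emptyset$. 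Hence the adjacency relations coincide and the identity is an isomorphism.

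For part $(i)$, I would build the following correspondence. Every family of hyperedges of $H^*$ has the form $\{{\cal E}_v \mid v \in S\}$ for some $S \subseteq V$, since the hyperedges of $H^*$ are exactly the sets ${\cal E}_v$. First I would note that such a family is pairwise intersecting if and only if $S$ is a clique of $2sec(H)$: for $u,v \in S$ we have ${\cal E}_u \cap {\cal E}_v \neq \emptyset$ exactly when some hyperedge of ${\cal E}$ contains both $u$ and $v$, which is precisely the condition for $uv$ to be an edge of $2sec(H)$. Next I would note that the total intersection $\bigcap_{v \in S} {\cal E}_v$ is nonempty if and only if there is a hyperedge $e \in {\cal E}$ with $v \in e$ for all $v \in S$, i.e., with $S \subseteq e$. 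Combining these two observations, the Helly property of $H^*$ --- every pairwise intersecting family has nonempty total intersection --- translates verbatim into the statement that every clique $S$ of $2sec(H)$ is contained in some hyperedge of ${\cal E}$, which is exactly conformality of $H$.

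The argument is essentially a matter of bookkeeping, so I do not expect a serious obstacle; the only point demanding a little care is the translation in part $(i)$ between a \emph{family} of hyperedges of $H^*$ and the \emph{set} $S$ of vertices indexing it. One must check that restricting attention to families of the form $\{{\cal E}_v \mid v \in S\}$ loses nothing (it does not, since these are all the hyperedges $H^*$ has) and that the degenerate cases --- a single vertex, or distinct vertices $u \neq v$ with ${\cal E}_u = {\cal E}_v$ --- are handled correctly by the clique/containment dictionary, which they are.
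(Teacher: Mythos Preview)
Your proof is correct and is exactly the standard argument. The paper does not actually supply a proof of this lemma; it merely introduces it with ``The following facts are well known'' and moves on. Your definitional unwinding is the expected justification, and your handling of the small bookkeeping issues in part~$(i)$ (multiplicities when ${\cal E}_u={\cal E}_v$, singleton families) is appropriate.
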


Thus:

\begin{corollary}\label{alphaacychypgrchar}
$H$ is $\alpha$-acyclic if and only if  $H$ is conformal and its $2$-section graph is chordal.
\end{corollary} 
 
It is easy to see that the dual ${\cal N}(G)^*$ of ${\cal N}(G)$ is ${\cal N}(G)$ itself, and
for any graph $G$: 
\begin{equation}\label{G2LNG}
G^2 \mbox { is isomorphic to } L({\cal N}(G)).  
\end{equation}

In \cite{BraDraCheVol1998}, the notion of dually chordal graphs was introduced: For a graph $G=(V,E)$ and a vertex $v \in V$, a vertex $u \in N[v]$ is a {\em maximum neighbor of $v$} if for all $w \in N[v]$, $N[w] \subseteq N[u]$ holds. (Note that by this definition, a vertex can be its own maximum neighbor.) 
Let $\sigma=(v_1,\ldots,v_n)$ be a vertex ordering of $V$. Such an ordering $\sigma$ is a {\em maximum neighborhood ordering} of $G$ if for every $i \in \{1,2,\ldots,n\}$, $v_i$ has a maximum neighbor in $G_i:=G[\{v_i,\ldots,v_n\}]$. A graph is {\em dually chordal} if it has a  maximum neighborhood ordering.
The following is known:

\begin{theorem}[\cite{BraCheDra1998,BraDraCheVol1998,DraPriChe1992}]\label{mainduallychgr}
Let $G$ be a graph. Then the following are equivalent: 
\begin{enumerate}[$(i)$]
\item $G$ is a dually chordal graph. 
\item ${\cal N}(G)$ is a hypertree.
\item ${\cal C}(G)$ is a hypertree.
\item $G$ is the $2$-section graph of some hypertree.
\end{enumerate}
\end{theorem}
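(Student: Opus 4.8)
The plan is to use Theorem~\ref{hyptreechar} (a hypergraph is a hypertree iff it is Helly and its line graph is chordal) together with the two elementary identities $2sec({\cal C}(G))=G$ and $N_G[v]=\bigcup_{e\in{\cal E},\,v\in e}e$ as the main tools, and to arrange the four conditions into the cycle $(ii)\Rightarrow(iii)\Rightarrow(iv)\Rightarrow(ii)$ supplemented by the separate equivalence $(i)\Leftrightarrow(ii)$. The guiding idea is that \emph{one and the same tree} $T$ can witness several of these hypertree properties at once, so that most implications reduce to checking that a given family of vertex sets induces subtrees of $T$.

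First I would dispose of the clean cycle among $(ii),(iii),(iv)$. For $(ii)\Rightarrow(iii)$, let $T$ be a tree on $V$ realizing ${\cal N}(G)$ as a hypertree, i.e.\ every $N_G[x]$ induces a subtree of $T$; I claim the \emph{same} $T$ realizes ${\cal C}(G)$. Indeed, let $C$ be an inclusion-maximal clique and suppose $T[C]$ were disconnected. Pick $x_1,x_2\in C$ in different components of $T[C]$ and let $P$ be the tree-path joining them; since each $N_G[x]$ with $x\in C$ is a subtree containing $\{x_1,x_2\}$, it contains all of $P$, so every vertex $y$ on $P$ lies in $\bigcap_{x\in C}N_G[x]$. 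Choosing $y\in P\setminus C$ (which exists, as $P$ leaves $C$) yields a vertex adjacent to every element of $C$, contradicting maximality. Hence all maximal cliques induce subtrees of $T$ and ${\cal C}(G)$ is a hypertree. The implication $(iii)\Rightarrow(iv)$ is immediate from $2sec({\cal C}(G))=G$. Finally, for $(iv)\Rightarrow(ii)$: if $G=2sec(H)$ for a hypertree $H$ realized by a tree $T$, then $N_G[v]=\bigcup_{e\ni v}e$ is a union of subtrees of $T$ all sharing the vertex $v$, hence itself induces a subtree, so $T$ realizes ${\cal N}(G)$.

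It remains to tie in $(i)$ through $(i)\Leftrightarrow(ii)$. For $(i)\Rightarrow(ii)$ I would build the \emph{maximum neighborhood tree}: given a maximum neighborhood ordering $\sigma=(v_1,\dots,v_n)$, join each $v_i$ $(i<n)$ to a maximum neighbor $u_i$ of $v_i$ in $G_i$ of larger index (some bookkeeping is needed to ensure $u_i\neq v_i$ can be arranged, handling the case where $v_i$ is its own maximum neighbor). Processing $\sigma$ from the back and using that $u_i$ dominates $N_{G_i}[v_i]$ (so that attaching $v_i$ to $u_i$ keeps every affected closed neighborhood connected) shows by induction that $T$ realizes ${\cal N}(G)$.

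The main obstacle is the converse $(ii)\Rightarrow(i)$, i.e.\ extracting a maximum neighborhood ordering from a realizing tree $T$; the crux is to produce, at each step, an eliminable vertex. By Theorem~\ref{hyptreechar} and (\ref{G2LNG}), $G^2=L({\cal N}(G))$ is chordal and ${\cal N}(G)$ is Helly, and I would exploit both. If $v$ is simplicial in $G^2$, then $N^2[v]$ is a clique of $G^2$, so the closed neighborhoods $\{N_G[z]:z\in N^2[v]\}$ pairwise intersect; by the Helly property of ${\cal N}(G)$ they have a common vertex $u$, and one checks $u\in N_G[v]$ and $N^2[v]\subseteq N_G[u]$, i.e.\ $u$ is a maximum neighbor of $v$. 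The delicate part is to run this compatibly with an elimination: I would take $v$ to be a \emph{leaf} of $T$ that is simplicial in $G^2$, observe that deleting such a leaf (necessarily a leaf of every closed neighborhood containing it) leaves a tree realizing ${\cal N}(G-v)$, and induct. Proving that such a simplicial leaf always exists — equivalently, that the perfect elimination structure of the chordal graph $G^2$ can be synchronized with the leaf structure of $T$ — is the step I expect to require the most care; the bull graph already shows that an \emph{arbitrary} leaf of $T$ need not have a maximum neighbor, so the choice of leaf genuinely matters.
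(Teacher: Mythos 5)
You are proving a result that the paper itself does not prove: Theorem~\ref{mainduallychgr} is imported verbatim from \cite{BraCheDra1998,BraDraCheVol1998,DraPriChe1992}, so there is no in-paper argument to compare against and your proposal has to stand on its own. The cycle $(ii)\Rightarrow(iii)\Rightarrow(iv)\Rightarrow(ii)$ is correct and essentially complete: the maximal-clique argument via the tree path $P$ and the Helly-type intersection is sound, $(iii)\Rightarrow(iv)$ is indeed immediate from $2sec({\cal C}(G))=G$, and the observation that $N_G[v]=\{v\}\cup\bigcup_{e\ni v}e$ is a union of subtrees sharing the vertex $v$ correctly handles $(iv)\Rightarrow(ii)$. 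The Helly computation in $(ii)\Rightarrow(i)$ is also correct and is the nicest part of the write-up: if $v$ is simplicial in $G^2$ then $\{N_G[z]:z\in N_{G^2}[v]\}$ is pairwise intersecting, a common vertex $u$ satisfies $N_{G^2}[v]\subseteq N_G[u]$, and since $N_G[w]\subseteq N_{G^2}[v]$ for every $w\in N_G[v]$, $u$ is a maximum neighbor of $v$.

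The genuine gap is exactly where you place it, and it is not closed: you never prove that a leaf of the realizing tree $T$ that is simplicial in $G^2$ exists, and without that the induction for $(ii)\Rightarrow(i)$ does not get off the ground. Note that chordality of $G^2$ guarantees simplicial vertices of $G^2$ and $T$ guarantees leaves, but nothing you have written forces these two sets to meet; your own bull example (with the realizing tree in which $c$ is a leaf) shows the sets need not coincide, so an actual selection argument (e.g.\ a deepest leaf of $T$ rooted suitably, or a switch to the clique hypergraph as in \cite{BraDraCheVol1998}) is required, and this is the real content of the theorem. A second, smaller gap sits in $(i)\Rightarrow(ii)$: in the backward induction the case $x=v_i$ requires showing that $N_{G_i}(v_i)$ induces a connected subgraph of $T_{i+1}$, but $N_{G_i}(v_i)$ is not a closed neighborhood of $G_{i+1}$, so the induction hypothesis does not apply to it directly; you only know $N_{G_i}(v_i)\subseteq N_{G_{i+1}}[u_i]$, and a subset of a subtree need not induce a subtree. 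Both directions therefore still need their key lemmas; as it stands the proposal establishes only the equivalence of $(ii)$, $(iii)$ and $(iv)$.
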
 

Thus, Theorems \ref{hyptreechar} and \ref{mainduallychgr} together with (\ref{G2LNG}) and the duality properties in Lemma \ref{basicdualityfacts} imply:  

\begin{corollary}[\cite{BraCheDra1998,BraDraCheVol1998,DraPriChe1992}]\label{linegrofalphaacychyp}
Let $G$ be a graph and $H$ be a hypergraph. 
\begin{enumerate}[$(i)$]
\item $G$ is dually chordal if and only if $G^2$ is chordal and ${\cal N}(G)$ has the Helly property.
\item If $H$ is $\alpha$-acyclic then its line graph $L(H)$ is dually chordal.
\item If $H$ is a hypertree then its $2$-section graph $2sec(H)$ is dually chordal.
\end{enumerate}
\end{corollary}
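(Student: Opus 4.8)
The plan is to obtain all three parts by chaining together the characterizations already assembled above, so that each statement reduces to a short application of Theorems \ref{hyptreechar} and \ref{mainduallychgr} combined with the duality identities in Lemma \ref{basicdualityfacts} and the isomorphism (\ref{G2LNG}). Two of the three parts will be essentially one-line deductions; the content sits in part $(ii)$.

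For part $(i)$, I would start from Theorem \ref{mainduallychgr}, which gives that $G$ is dually chordal if and only if ${\cal N}(G)$ is a hypertree. By Theorem \ref{hyptreechar}, ${\cal N}(G)$ is a hypertree exactly when it has the Helly property \emph{and} its line graph $L({\cal N}(G))$ is chordal. Since $L({\cal N}(G))$ is isomorphic to $G^2$ by (\ref{G2LNG}), these two conditions read precisely ``${\cal N}(G)$ has the Helly property and $G^2$ is chordal'', which is the assertion. Part $(iii)$ is immediate: if $H$ is a hypertree, then $2sec(H)$ is by construction the $2$-section graph of a hypertree, hence dually chordal by the implication $(iv)\Rightarrow(i)$ of Theorem \ref{mainduallychgr}.

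The real work is part $(ii)$, and the strategy is to show that $\alpha$-acyclicity of $H$ forces the \emph{dual} hypergraph $H^*$ to be a hypertree. Once this is established, Lemma \ref{basicdualityfacts}$(ii)$ gives $L(H)\cong 2sec(H^*)$, so $L(H)$ is the $2$-section graph of a hypertree and is therefore dually chordal by part $(iii)$ (equivalently Theorem \ref{mainduallychgr}$(iv)$). To verify that $H^*$ is a hypertree I would check the two hypotheses of Theorem \ref{hyptreechar}. First, $H$ is $\alpha$-acyclic, hence conformal by Corollary \ref{alphaacychypgrchar}, and by Lemma \ref{basicdualityfacts}$(i)$ conformality of $H$ is equivalent to $H^*$ having the Helly property. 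Second, I need $L(H^*)$ to be chordal; here I would use that $L(H^*)\cong 2sec(H)$, and $2sec(H)$ is chordal because $H$ is $\alpha$-acyclic (again Corollary \ref{alphaacychypgrchar}). Thus $H^*$ meets both hypotheses of Theorem \ref{hyptreechar} and is a hypertree.

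The one step that needs genuine care—and the main obstacle—is the identity $L(H^*)\cong 2sec(H)$. I would obtain it directly from the definitions rather than through an abstract double-dual argument: the vertices of $L(H^*)$ are the sets ${\cal E}_v$ (identified with $v\in V$), and ${\cal E}_u$ meets ${\cal E}_v$ precisely when some hyperedge $e\in{\cal E}$ contains both $u$ and $v$, which is exactly the adjacency relation of $2sec(H)$. (Equivalently, one may apply Lemma \ref{basicdualityfacts}$(ii)$ to $H^*$ and use $H^{**}\cong H$, which is a routine check.) With this identity in hand, the chain of part $(ii)$ closes, and parts $(i)$ and $(iii)$ are the immediate consequences described above.
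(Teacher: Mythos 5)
Your proposal is correct and follows exactly the route the paper intends: it states the corollary as an immediate consequence of Theorems \ref{hyptreechar} and \ref{mainduallychgr}, the isomorphism (\ref{G2LNG}), and the duality facts of Lemma \ref{basicdualityfacts}, which is precisely the chain you spell out (including the only nontrivial link, $L(H^*)\cong 2sec(H)$, which you verify correctly from the definitions).
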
 

\section{Efficient (Edge) Domination in General}

Recall that a subset $D \subseteq V$ of vertices is an efficient dominating set if for all $v \in V$, there is exactly one $d \in D$ such that $v \in N[d]$. Also a subset $M \subseteq E$ of edges is an efficient edge dominating set in $G$ if for all $e \in E$, there is exactly one $e' \in M$ intersecting the edge $e$. 

Both definitions can be extended to hypergraphs: A subset $D \subseteq V$ is an \emph{efficient dominating set} for a hypergraph~$H$ if it is an efficient dominating set for its 2-section graph $2sec(H)$. A subset $M \subseteq {\cal E}$ of hyperedges is an \emph{efficient edge dominating set} for $H$ if for all $e \in \cal E$, there is exactly one $e' \in M$ intersecting the edge $e$.

\begin{corollary}\label{cor:EEDdually}
    A vertex set $D$ is an efficient dominating set in $H$ if and only if $D$ is an efficient edge dominating set in $H^*$.
\end{corollary}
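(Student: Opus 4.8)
The plan is to unwind both sides to a single combinatorial condition phrased in terms of the sets $\mathcal{E}_v$. First I would fix the identification that makes the statement meaningful: since the hyperedges of $H^*=(\mathcal{E},\{\mathcal{E}_v\mid v\in V\})$ are indexed by the vertices $v\in V$ of $H$, a vertex set $D\subseteq V$ corresponds to the set of hyperedges $M:=\{\mathcal{E}_d\mid d\in D\}$ of $H^*$, and it is this $M$ that the statement calls ``$D$''. Under this identification, ``$D$ is an efficient edge dominating set in $H^*$'' reads: for every $v\in V$ there is exactly one $d\in D$ with $\mathcal{E}_d\cap\mathcal{E}_v\neq\emptyset$. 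On the other side, ``$D$ is an efficient dominating set in $H$'' means, by definition, that for every $v\in V$ there is exactly one $d\in D$ with $v\in N_{2sec(H)}[d]$.

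The key step is the elementary equivalence
\[
u\in N_{2sec(H)}[v]\quad\Longleftrightarrow\quad \mathcal{E}_u\cap\mathcal{E}_v\neq\emptyset,
\]
valid for all $u,v\in V$ lying in at least one hyperedge. Indeed, for $u\neq v$ the vertices $u,v$ are adjacent in $2sec(H)$ exactly when some hyperedge $e$ contains both, i.e.\ when $e\in\mathcal{E}_u\cap\mathcal{E}_v$; and for $u=v$ we have $\mathcal{E}_u\cap\mathcal{E}_v=\mathcal{E}_v\neq\emptyset$, matching $v\in N_{2sec(H)}[v]$. I would verify this small equivalence carefully, reading the reflexive case on both sides (``a vertex dominates itself'' versus ``a hyperedge intersects itself'') so that the closed neighborhood on the left really corresponds to allowing $d=v$ on the right. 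Given it, the corollary follows by counting: for each fixed $v$ the bijection $d\mapsto\mathcal{E}_d$ turns ``the $d\in D$ dominating $v$ in $2sec(H)$'' into ``the $\mathcal{E}_d\in M$ intersecting $\mathcal{E}_v$'', so there is exactly one of the former iff exactly one of the latter; hence $D$ is efficient dominating in $H$ iff $M$ is efficient edge dominating in $H^*$.

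I expect the only real obstacle to be bookkeeping rather than mathematics: transporting the ``exactly one'' quantifier correctly under the identification $d\leftrightarrow\mathcal{E}_d$. Two points deserve a line each. First, an isolated vertex $v$ (in no hyperedge) gives $\mathcal{E}_v=\emptyset$; such a $v$ is dominated by nobody in $2sec(H)$ and its dual hyperedge is intersected by nobody, so both sides fail together and the equivalence is preserved. Second, if distinct $d,d'$ satisfy $\mathcal{E}_d=\mathcal{E}_{d'}\neq\emptyset$ they are adjacent twins in $2sec(H)$, so neither an efficient dominating set nor an efficient edge dominating set can contain both; counting the hyperedges of $H^*$ by their index $d\in V$ (as an indexed family, not collapsing equal sets) keeps the two ``exactly one'' conditions in exact correspondence. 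Alternatively, one could bypass the counting and derive the statement from Lemma~\ref{basicdualityfacts}$(ii)$ applied to $H^*$: this yields $L(H^*)\cong 2sec((H^*)^*)\cong 2sec(H)$ with the vertex $\mathcal{E}_v$ of $L(H^*)$ matched to $v$ (using that the dual of the dual is $H$), and since efficient edge domination of $H^*$ is by definition efficient domination of $L(H^*)$, the isomorphism carries it to efficient domination of $2sec(H)$.
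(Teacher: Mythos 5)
The paper gives no proof of this corollary at all---it is stated as immediate from the definitions of efficient domination in a hypergraph (via $2sec(H)$) and efficient edge domination (exactly one intersecting hyperedge)---and your unwinding via the equivalence $u\in N_{2sec(H)}[v]\Leftrightarrow \mathcal{E}_u\cap\mathcal{E}_v\neq\emptyset$ together with the index bijection $d\leftrightarrow\mathcal{E}_d$ is exactly the intended justification; your alternative route through Lemma~\ref{basicdualityfacts}$(ii)$ and $(H^*)^*=H$ is also sound and is essentially what Corollary~\ref{cor:mainequived} records. One small correction: in your isolated-vertex remark, it is not true that such a $v$ ``is dominated by nobody in $2sec(H)$''---every vertex lies in its own closed neighborhood, so the ED condition for $v$ can be met by putting $v\in D$, whereas the empty hyperedge $\mathcal{E}_v$ of $H^*$ is intersected by nothing; the two sides therefore do \emph{not} fail together, and the statement genuinely requires the standing convention that every vertex lies in at least one hyperedge (which your main equivalence already assumes, and which the paper tacitly adopts).
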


The following approach developed in \cite{Leite2012} and \cite{Milan2011} gives a tool for showing that for various classes of graphs, the ED problem can be solved in polynomial time.    
For a graph $G=(V,E)$, we define the following vertex weight function: Let $\omega(v):= |N_G[v]|$ (i.e., $\omega(v):= deg(v)+1$), and for $D \subseteq V$, let $\omega(D):=\Sigma_{d \in D}$ $\omega(d)$. Obviously, the following holds:

\begin{proposition}\label{domweightinequ}
Let $G=(V,E)$ be a graph and $D \subseteq V$.
\begin{enumerate}[$(i)$]
\item If $D$ is a dominating vertex set in $G$ then $\omega(D) \ge |V|$. 
\item If $D$ is an independent vertex set in $G^2$ then $\omega(D) \le |V|$. 
\end{enumerate}
\end{proposition} 

\begin{lemma}\label{mainequived}
Let $G=(V,E)$ be a graph and $\omega(v):= |N[v]|$ a vertex weight function for $G$. Then the following are equivalent for any subset $D \subseteq V$:
\begin{enumerate}[$(i)$]
\item $D$ is an efficient dominating set in $G$
\item $D$ is a minimum weight dominating vertex set in $G$ with $\omega(D)=|V|$.
\item $D$ is a maximum weight independent vertex set in $G^2$ with $\omega(D)=|V|$.
\end{enumerate}
\end{lemma}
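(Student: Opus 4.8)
The plan is to prove a cycle of implications among the three conditions, leaning heavily on Proposition~\ref{domweightinequ}, which sandwiches $\omega(D)$ between $|V|$ and $|V|$ for dominating-and-independent sets. First I would unpack the definitions: a set $D$ is efficient dominating exactly when the closed neighborhoods $\{N[d] : d \in D\}$ partition $V$, that is, they cover $V$ and are pairwise disjoint. Covering $V$ is precisely the statement that $D$ is a dominating set; pairwise disjointness of closed neighborhoods is precisely the statement that $D$ is independent in $G^2$, since two vertices $d, d'$ have intersecting closed neighborhoods if and only if they are at distance at most $2$ in $G$, i.e. adjacent in $G^2$. This reformulation is the conceptual heart of the argument.

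For the implication $(i) \Rightarrow (ii)$ and $(i) \Rightarrow (iii)$, I would argue as follows. If $D$ is efficient dominating, then $\{N[d]\}_{d \in D}$ is a partition of $V$, so $\omega(D) = \sum_{d \in D} |N[d]| = |V|$ by disjoint counting. Being a partition, $D$ is in particular a dominating set, and by Proposition~\ref{domweightinequ}$(i)$ any dominating set $D'$ satisfies $\omega(D') \ge |V| = \omega(D)$, so $D$ has minimum weight among dominating sets; this gives $(ii)$. The disjointness also makes $D$ independent in $G^2$, so $(iii)$ follows with $\omega(D) = |V|$ as well.

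For the converse directions I would use the two inequalities together. Suppose $(ii)$ holds, so $D$ is dominating with $\omega(D) = |V|$. Summing $|N[d]|$ over $d \in D$ counts each vertex at least once (domination) but gives total exactly $|V|$, forcing every vertex to be counted exactly once; hence the closed neighborhoods are disjoint and cover $V$, which is $(i)$. Symmetrically, suppose $(iii)$ holds, so $D$ is independent in $G^2$ with $\omega(D) = |V|$. Independence in $G^2$ makes the closed neighborhoods pairwise disjoint, so $\sum_{d \in D} |N[d]| = |V|$ with disjoint sets means $\bigcup_{d \in D} N[d]$ has exactly $|V|$ elements, i.e. equals $V$; thus $D$ also dominates, and again the partition property gives $(i)$. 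This also shows that a maximum-weight independent set in $G^2$ cannot exceed weight $|V|$ by Proposition~\ref{domweightinequ}$(ii)$, so the hypothesis $\omega(D) = |V|$ indeed pins $D$ at the maximum.

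The argument is essentially a double-counting sandwich and I expect no serious obstacle; the only point demanding care is the equivalence between disjointness of closed neighborhoods and independence in $G^2$, where one must correctly handle the distance-$2$ adjacency (noting that $d$ and $d'$ share a common neighbor, or are themselves adjacent, precisely when $N[d] \cap N[d'] \neq \emptyset$). Making that biconditional explicit is what turns the two one-sided inequalities of Proposition~\ref{domweightinequ} into the tight characterization, and it is the step I would write out most carefully.
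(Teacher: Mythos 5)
Your proposal is correct and follows essentially the same route as the paper's proof: both arguments reduce everything to the observation that the closed neighborhoods $N[d]$, $d\in D$, partition $V$, and both use Proposition~\ref{domweightinequ} to turn $\omega(D)=|V|$ into the extremality claims. Your write-up is slightly more explicit about the equivalence between disjointness of closed neighborhoods and independence in $G^2$, and proves the implications in a different arrangement (two out of $(i)$ and two back into $(i)$ rather than the paper's cycle), but the substance is identical.
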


\begin{proof}
(i) $\rightarrow$ (ii): If $D$ is an efficient dominating set in $G$ then, by definition, it is a dominating vertex set where every vertex $v \in V$ is in exactly one closed neighborhood of elements of $D$. Thus, the closed neighborhoods $N[d]$, $d \in D$, give a partition of $V$, and thus, $\omega(D)=|V|$ holds. Also, by Proposition~\ref{domweightinequ} there is no $D'$ with $\omega(D') < \omega(D)$.
 
\noindent
(ii) $\rightarrow$ (iii): If $D$ is a dominating set in $G$ with $\omega(D)=|V|$ then the closed neighborhoods $N[d]$, $d \in D$, give a partition of $V$, and 
in particular, $D$ is an maximum independent vertex set in $G^2$ with $\omega(D)=|V|$.

\noindent
(iii) $\rightarrow$ (i): If $D$ is an independent vertex set in $G^2$ with $\omega(D)=|V|$ then the closed neighborhoods $N[d]$, $d \in D$, give a partition of $V$,
and thus, $D$ is an efficient dominating set in $G$. 
\qed
\end{proof}

Note that $D$ is not any independent (dominating) set, but a maximum (minimum) weight one. This implies: 

\begin{corollary}\label{maincor}
For every graph class ${\cal C}$ for which the MWIS problem is solvable in polynomial time on squares of graphs from ${\cal C}$, the ED problem for ${\cal C}$ is solvable in polynomial time. 
\end{corollary}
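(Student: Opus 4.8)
The plan is to reduce the ED problem on a graph $G \in {\cal C}$ to a single MWIS computation on the square $G^2$, exploiting the equivalence $(i) \Leftrightarrow (iii)$ of Lemma~\ref{mainequived}. Since by hypothesis MWIS is solvable in polynomial time on squares of graphs from ${\cal C}$, this reduction immediately yields a polynomial time algorithm for ED on ${\cal C}$.

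Concretely, given $G=(V,E) \in {\cal C}$, I would first construct its square $G^2$, which has the same vertex set $V$ and can be built in polynomial time (its edges are exactly the pairs of vertices at distance at most two in $G$). I would then equip $G^2$ with the vertex weight function $\omega(v):=|N_G[v]|=\deg_G(v)+1$, which is computable in linear time, and run the assumed polynomial time MWIS algorithm on $(G^2,\omega)$ to obtain the maximum weight $W$ of an independent set in $G^2$, together with a witnessing set $D^*$ attaining it.

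For correctness I would argue as follows. By Proposition~\ref{domweightinequ}$(ii)$, every independent set in $G^2$ has $\omega$-weight at most $|V|$, hence $W \le |V|$ always. If $G$ admits an efficient dominating set $D$, then by the implication $(i)\Rightarrow(iii)$ in Lemma~\ref{mainequived} the set $D$ is independent in $G^2$ with $\omega(D)=|V|$, forcing $W=|V|$. Conversely, if $W=|V|$ then the returned set $D^*$ is independent in $G^2$ with $\omega(D^*)=|V|$, so the implication $(iii)\Rightarrow(i)$ in Lemma~\ref{mainequived} shows that $D^*$ is an efficient dominating set in $G$. Therefore $G$ has an e.d.\ set if and only if $W=|V|$, and in the affirmative case $D^*$ is such a set.

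There is essentially no serious obstacle here: all the structural work is already carried out in Lemma~\ref{mainequived}, and the only points to check are the polynomial constructibility of $G^2$ and the uniform upper bound $W \le |V|$ from Proposition~\ref{domweightinequ}, both of which are routine. The one place demanding a little care is recognizing that deciding ``$G$ has an e.d.\ set'' reduces not to producing an arbitrary independent set but specifically to testing whether the \emph{maximum} attainable weight equals the threshold $|V|$; this is exactly what the weight bound $W \le |V|$ together with the sharp equivalence of Lemma~\ref{mainequived} guarantees.
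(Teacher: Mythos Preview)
Your argument is correct and is exactly the approach the paper takes: the corollary is stated immediately after Lemma~\ref{mainequived} with only the one-line remark that the equivalence there concerns \emph{maximum} weight independent sets, so a single MWIS computation on $G^2$ with weights $\omega(v)=|N_G[v]|$ decides ED. You have simply spelled out this implication in more detail than the paper does.
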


\begin{corollary}\label{cor:mainequived}
Let $H$ be a hypergraph, $L(H)=(V,E)$ its line graph and $\omega(v):= |N[v]|$ a vertex weight function for $L(H)$ as above. Then the following are equivalent for any subset $D \subseteq V$:
\begin{enumerate}[$(i)$]
\item $D$ is an efficient edge dominating set in $H$
\item $D$ is an efficient dominating set in $L(H)$. 
\item $D$ is a minimum weight dominating vertex set in $L(H)$ with $\omega(D)=|V|$.
\item $D$ is a maximum weight independent vertex set in $L(H)^2$ with $\omega(D)=|V|$.
\end{enumerate}
\end{corollary}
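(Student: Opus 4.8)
The plan is to reduce everything to Lemma~\ref{mainequived} applied to the graph $G := L(H)$. The only genuinely new content is the equivalence of $(i)$ with $(ii)$; the chain $(ii) \Leftrightarrow (iii) \Leftrightarrow (iv)$ is then an immediate instantiation of that lemma.

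First I would unfold the definition of the line graph $L(H)$. Its vertex set is the hyperedge set ${\cal E}$, and two hyperedges are adjacent in $L(H)$ exactly when they intersect. Hence, for a subset $D \subseteq {\cal E} = V$ and a hyperedge $e \in {\cal E}$, the condition $e \in N_{L(H)}[d]$ for $d \in D$ means either $d = e$ or $d \cap e \neq \emptyset$. Assuming, as usual, that every hyperedge is nonempty so that each hyperedge intersects itself, this is precisely the condition that $d$ intersects $e$. Consequently ``there is exactly one $d \in D$ dominating $e$ in $L(H)$'' is literally the statement ``there is exactly one $d \in D$ intersecting $e$'', which is the definition of an efficient edge dominating set for $H$. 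This establishes $(i) \Leftrightarrow (ii)$.

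Next I would invoke Lemma~\ref{mainequived} with the graph taken to be $G := L(H)$ and the weight $\omega(v) := |N_{L(H)}[v]|$. For this choice, the three equivalent conditions listed there---being an efficient dominating set in $L(H)$, being a minimum weight dominating set in $L(H)$ with $\omega(D) = |V|$, and being a maximum weight independent set in $L(H)^2$ with $\omega(D) = |V|$---are exactly the statements $(ii)$, $(iii)$, $(iv)$. Combining this with $(i) \Leftrightarrow (ii)$ yields the full equivalence. The main obstacle, such as it is, lies only in the bookkeeping of the first step: one must carefully match the ad hoc notion of ``intersecting'' used to define efficient edge domination in a hypergraph with closed-neighborhood domination in $L(H)$, in particular verifying that self-intersection of a hyperedge correctly models the fact that a vertex dominates itself. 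Once this translation is pinned down no further work is needed, since Lemma~\ref{mainequived} does all the heavy lifting.
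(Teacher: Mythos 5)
The paper gives no explicit proof of this corollary, treating it as an immediate consequence of the hypergraph definitions and Lemma~\ref{mainequived} applied to $L(H)$; your argument spells out exactly that intended reasoning, including the correct observation that domination in $L(H)$ translates to hyperedge intersection. The proposal is correct and takes essentially the same approach as the paper.
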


\section{Efficient Domination in Graphs}

This section presents results for the ED problem on some graph classes.

\begin{theorem}[\cite{LuTan2002,YenLee1996}]\label{theo:EDchordalNPc}
    The ED problem is \NP-complete for bipartite graphs, for chordal graphs as well as for chordal bipartite graphs.  
\end{theorem}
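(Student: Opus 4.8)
The plan is to establish NP-completeness by reduction from the Exact Cover by 3-Sets (X3C) problem, which the paper has already noted (problem SP2 in Garey-Johnson) is NP-complete even for hyperedges of size exactly three. Since the excerpt explicitly records that the ED problem on a graph $G$ is equivalent to the Exact Cover problem on the closed neighborhood hypergraph ${\cal N}(G)$, the entire task reduces to realizing an arbitrary 3-uniform hypergraph as (essentially) the closed neighborhood hypergraph of a suitably constructed graph $G$ lying in the desired class, and then arguing that an exact cover of the instance corresponds to an e.d.\ set of $G$. Because three distinct graph classes are asserted, I would aim for a single gadget construction whose output can be checked to be simultaneously bipartite and chordal bipartite (the chordal case then follows, since the strongest target class does most of the work, or by a parallel construction).

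First I would set up the reduction: given an X3C instance with ground set $X=\{x_1,\dots,x_n\}$ and a collection ${\cal S}=\{S_1,\dots,S_m\}$ of three-element subsets of $X$, I build a graph $G$ with an element-vertex for each $x_i$ and a set-vertex for each $S_j$, joining $S_j$ to $x_i$ whenever $x_i\in S_j$. This incidence graph is automatically bipartite. The design goal is that choosing a set-vertex $S_j$ into the dominating set $D$ should cover exactly its three elements and nothing else, so that an e.d.\ set corresponds exactly to a subcollection of the $S_j$ partitioning $X$. To force the selected dominators to be set-vertices and to make the domination condition mimic exact cover, I would attach small pendant gadgets (for instance, pendant paths or forced private neighbors) to each set-vertex $S_j$ so that every $S_j$ must be dominated in a controlled way; the pendants guarantee that every set-vertex is either in $D$ or dominated by a neighbor in $D$, and they prevent element-vertices from being chosen.

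The heart of the correctness argument is the forward and backward equivalence: an exact cover ${\cal S}'\subseteq{\cal S}$ yields an e.d.\ set by taking the corresponding set-vertices together with whatever gadget vertices are forced, and conversely any e.d.\ set in $G$ projects to an exact cover because the uniqueness-of-dominator condition translates precisely into the disjointness-and-covering condition of X3C. I would verify that the pendant structure is light enough that the gadgets do not create alternative dominators that break this correspondence, and heavy enough that element-vertices cannot be selected. For the graph-class membership I would check that the chosen gadgets introduce no induced cycle of length $\ge 6$ that fails to have a chord (chordal bipartite means bipartite with no induced $C_k$ for $k\ge 6$), so that $G$ is chordal bipartite; bipartiteness is immediate from the incidence structure, and for the chordal (non-bipartite) case one replaces each element side by a clique or uses a parallel construction so that $G^2$-type cycles are chorded.

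The main obstacle I anticipate is the graph-class membership check rather than the logical equivalence. Forcing the dominators to be set-vertices typically requires attaching private neighbors or pendant paths, and such attachments are exactly what tend to create long induced cycles or chordless structures; reconciling the combinatorial forcing (which wants rigidity in who dominates whom) with the structural sparsity demanded by chordal bipartite graphs (no chordless $C_6$) is the delicate balancing act. I would therefore devote the most care to choosing gadgets whose induced cycles, if any, all acquire chords, and to confirming that no gadget vertex can serve as an unintended dominator of an element-vertex, since either failure would sever the exact-cover correspondence.
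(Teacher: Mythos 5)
There is no proof of this statement in the paper to compare against: Theorem~\ref{theo:EDchordalNPc} is quoted from \cite{LuTan2002,YenLee1996} and used as a black box. Your general strategy --- reduce from X3C via the incidence (set--element) graph and add forcing gadgets --- is indeed the strategy of the cited papers, so the direction is right. But what you have written is a plan, not a proof, and the part you defer is exactly the part that carries the entire difficulty. You never specify the gadgets, so the two claims on which correctness rests (element-vertices cannot be chosen as dominators; gadget vertices cannot dominate anything outside their gadget) are not established. Note that without gadgets the correspondence genuinely fails: in the bare incidence graph an element-vertex placed in $D$ dominates itself and all sets containing it, and such solutions need not project to exact covers, so the forcing is not optional.

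The second, more serious gap is class membership. The incidence graph of an arbitrary 3-uniform set system is bipartite but in general \emph{not} chordal bipartite: three sets pairwise sharing distinct elements already produce an induced $C_6$ with no possible chord (all chords would join vertices on the same side). So your hope that ``a single gadget construction'' lands simultaneously in all three classes cannot be realized by decorating the incidence graph; the construction for chordal graphs and the one for chordal bipartite graphs must restructure the element side differently (which is why, historically, these cases were settled in separate papers with separate reductions). Identifying the tension between forcing rigidity and forbidding chordless $C_6$'s, as you do in your last paragraph, is not the same as resolving it; as it stands the argument establishes NP-completeness for none of the three classes, only for general (bipartite) incidence graphs modulo an unspecified gadget.
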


By Corollary~\ref{linegrofalphaacychyp}~(i), the square of a dually chordal graph is chordal. Thus, based on Lemma~\ref{mainequived}, ED for dually chordal graphs can be solved in polynomial time by solving the MWIS problem on chordal graphs. However, the MWIS problem is solvable in linear time for chordal graphs with the following algorithm:

\begin{algo}[\cite{Frank1975}]\label{algo:MWISchordal}\normalfont\ \\\nopagebreak
\textbf{Input:} A chordal graph $G=(V,E)$ with $|V| = n$ and a vertex weight function~$\omega$.\\
\textbf{Output:} A maximum weight independent set $\cal I$ of $G$.

\begin{enumerate}[(1)]

    \item  Find a perfect elimination ordering $(v_1,\ldots,v_n)$ and set $\cal I := \emptyset$.
        
    \item \textbf{For} $i:= 1$ \textbf{To} $n$
    \begin{enumerate}
        \item[] If $\omega(v_i) > 0$, mark $v$ and set $\omega(u) := \max(\omega(u) - \omega(v_i), 0)$ for all vertices $u \in N(v_i)$.
    \end{enumerate}

    \item \textbf{For} $i:= n$ \textbf{DownTo} $1$
    \begin{enumerate}
        \item[] If $v_i$ is marked, set ${\cal I} := {\cal I} \cup \{v_i\}$ and unmark all $u \in N(v_i)$.
    \end{enumerate}
\end{enumerate}
\end{algo}

By using the following lemmas, the algorithm can be modified in such a way, that it solves the ED problem for dually chordal graphs in linear time.

\begin{lemma}[\cite{BraCheDra1998}]\label{lem:MNO_linear}
    A maximum neighborhood ordering of $G$ which simultaneously is a perfect elimination ordering of $G^2$ can be found in linear time.
\end{lemma}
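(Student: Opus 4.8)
The plan is to produce a \emph{single} vertex ordering $\sigma=(v_1,\ldots,v_n)$ and then verify that it meets both demands at once, rather than computing a maximum neighborhood ordering and a perfect elimination ordering separately. First I would compute a maximum neighborhood ordering directly on $G$: since $G$ is dually chordal such an ordering exists, and a suitable search (a variant of Maximum Cardinality Search / BFS tailored to dually chordal graphs) produces one in $O(n+m)$ time while recording, for each $i$, a maximum neighbor $u_i$ of $v_i$ in $G_i$. The point for the running time is that the square $G^2$ is \emph{never} built explicitly, as it may have $\Theta(n^2)$ edges; the ordering has to be read off from $G$ alone.

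The structural heart is to show that \emph{this} ordering is automatically a perfect elimination ordering of $G^2$ (which is chordal by Corollary~\ref{linegrofalphaacychyp}~(i), so a perfect elimination ordering exists). The key observation is a reformulation of the maximum-neighbor condition: if $u$ is a maximum neighbor of $v$, then $\bigcup_{w\in N[v]}N[w]=N_{G^2}[v]\subseteq N[u]$, while $u\in N[v]$ gives $N[u]\subseteq N_{G^2}[v]$; hence $N_{G^2}[v]=N[u]$, and $N[u]$ is a clique of $G^2$ since all its vertices share the common neighbor $u$. Applied inside $G_i$, this says that the square-neighborhood of $v_i$ \emph{within} $G_i$ is exactly $N_{G_i}[u_i]$, a clique of $(G_i)^2$, so $v_i$ is simplicial in $(G_i)^2$.

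The main obstacle is bridging from the square of the induced subgraph $(G_i)^2$ to the induced subgraph of the square $(G^2)_i$, whose edge set is in general larger: a pair $x,y\in\{v_i,\ldots,v_n\}$ may satisfy $d_G(x,y)\le 2$ through an already-eliminated vertex while $d_{G_i}(x,y)>2$. I would close this gap by proving that a maximum neighborhood ordering can never eliminate such a witness too early, i.e. that every $G^2$-neighbor of $v_i$ surviving in $G_i$ already lies in $N_{G_i}[u_i]$. Concretely, if $d_G(v_i,x)\le 2$ only via removed common neighbors, I track among all common neighbors of $v_i$ and $x$ the one $c$ of largest index; the maximum-neighbor condition at the step eliminating $c$ forces its maximum neighbor to again be a common neighbor of $v_i$ and $x$, and by maximality of the index this forces $c$ to be its own maximum neighbor, hence universal in its component of the current subgraph. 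Such universality is precisely what prevents $v_i$ from being processed while two vertices at $G^2$-distance $>2$ still remain; the path $P_5$ is the instructive small case, where the centre cannot receive a maximum neighbor until both endpoints have been peeled off. Carrying out this peeling recursion carefully is the technical crux, and once it is done $v_i$ is simplicial in $(G^2)_i$ and $\sigma$ is the desired simultaneous ordering.

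Finally I would assemble the time bound: the ordering together with the maximum neighbors $u_i$ comes from one linear-time search on $G$, and the structural claim certifies, with no further computation on the possibly dense $G^2$, that the same ordering is a perfect elimination ordering of $G^2$. The genuinely algorithmic difficulty is thus entirely contained in never materializing $G^2$; the combinatorial difficulty is entirely the reconciliation of $(G_i)^2$ with $(G^2)_i$ described above.
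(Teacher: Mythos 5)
First, a point of orientation: the paper gives no proof of this lemma --- it is imported verbatim from \cite{BraCheDra1998} --- so the only in-paper material to compare against is Lemma~\ref{lem:ijInE_iff_mjInE2}, which is precisely the ``bridging'' statement you identify as the crux, together with the remark preceding it. That comparison exposes a genuine gap in your argument. Your intermediate claim, that every $G^2$-neighbor of $v_i$ surviving in $G_i$ already lies in $N_{G_i}[u_i]$, is false for an arbitrary maximum neighborhood ordering. Take $G=K_{1,3}$ with center $c$ and leaves $p,x,z$, and the ordering $(c,p,x,z)$: this is a legal maximum neighborhood ordering (each vertex is its own maximum neighbor at its step), so $u_2=p$ and $N_{G_2}[u_2]=\{p\}$, yet $x$ and $z$ are $G^2$-neighbors of $p=v_2$ inside $G_2$. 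The ordering is still a perfect elimination ordering of $G^2=K_4$, but not for the reason you give. The ingredient you are missing is the normalization --- stated explicitly in the paper right after the lemma and guaranteed by the algorithm of \cite{BraCheDra1998} --- that $v_i\neq m_i$ for all $i<n$. With it, your own chain of deductions closes instantly: the largest-index eliminated common neighbor $c=v_k$ of $v_i$ and $x$ (with $k<i\le n$, hence $k<n$) would have to be its own maximum neighbor, contradicting $v_k\neq m_k$; so no such $c$ exists, every later $G^2$-neighbor of $v_i$ is a $G$-neighbor of $m_i$, and $N[m_i]$ is a clique of $G^2$. Without that normalization, ``$c$ is universal in its component of $G_k$'' does not ``prevent $v_i$ from being processed'' ($v_i$ is processed perfectly well in the star example); one must instead show directly that any two later $G^2$-neighbors $x,y$ of $v_i$ are at distance at most $2$ from \emph{each other}, which requires introducing a second universal vertex $c'$ for $y$ and comparing the indices of $c$ and $c'$ --- exactly the ``peeling recursion'' you defer as ``the technical crux'' rather than carry out.

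Two smaller remarks. The correct parts of your sketch are genuinely the right ones: $N_{G^2}[v]=N[u]$ for a maximum neighbor $u$ of $v$, the cliqueness of $N[u]$ in $G^2$, the danger of confusing $(G_i)^2$ with $(G^2)_i$, and the insistence on never materializing $G^2$. The paper's Lemma~\ref{lem:ijInE_iff_mjInE2} is the cleaned-up version of your bridge, proved by iteratively replacing an eliminated common neighbor $v_k$ by $m_k$, whose index is strictly larger precisely because $m_k\neq v_k$, until the index exceeds $i$. Finally, the assertion that a maximum neighborhood ordering together with the maximum neighbors (and the normalization $v_i\neq m_i$) can be produced in $O(n+m)$ time is the genuinely algorithmic half of the cited result; your one-sentence appeal to ``a variant of MCS/BFS'' is tolerable only because the lemma is itself a citation, not because that claim is routine.
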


The algorithm given in \cite{BraCheDra1998} not only finds a maximum neighborhood ordering $(v_1,\ldots,v_n)$. It also computes the maximum neighbors $m_i$ for each vertex $v_i$ with the property that for all $i < n$ no vertex $v_i$ is its own maximum neighbor ($v_i \neq m_i$). This is necessary for the following lemma.

\begin{lemma}\label{lem:ijInE_iff_mjInE2}
Let $G=(V,E)$ be a graph with $G^2=(V,E^2)$ and a maximum neighborhood ordering $(v_1, \ldots, v_n)$ where $m_i$ is the maximum neighbor of $v_i$ with $v_i \neq m_i$ and $1 \leq i < j \leq n$. Then: $v_iv_j \in E^2 \Leftrightarrow m_iv_j \in E$.
\end{lemma}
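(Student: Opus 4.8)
The plan is to prove the two implications separately, reducing everything to closed neighborhoods in the suffix graphs $G_i = G[\{v_i, \ldots, v_n\}]$, since the maximum-neighbor property of $m_i$ is only guaranteed relative to $G_i$. Throughout I will use only the following structural input: $m_i$ being the maximum neighbor of $v_i$ in $G_i$ means $m_i \in N_{G_i}[v_i]$ and $N_{G_i}[w] \subseteq N_{G_i}[m_i]$ for every $w \in N_{G_i}[v_i]$; moreover $m_i = v_l$ for some $l > i$ since $m_i \neq v_i$.

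The implication $m_i v_j \in E \Rightarrow v_i v_j \in E^2$ is immediate. Since $m_i \in N_{G_i}[v_i]$ and $m_i \neq v_i$ we have $v_i m_i \in E$, so $v_i, m_i, v_j$ is a walk of length two in $G$; as $v_i \neq v_j$ (because $i < j$), the distance between $v_i$ and $v_j$ in $G$ is at most two, hence $v_i v_j \in E^2$.

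For the converse I would prove the slightly stronger membership $v_j \in N_{G_i}[m_i]$, which yields $m_i v_j \in E$ whenever $v_j \neq m_i$. If $v_i v_j \in E$, then $v_j \in N_{G_i}[v_i] \subseteq N_{G_i}[m_i]$ directly from the maximum-neighbor property applied with $w = v_i$. Otherwise $v_i$ and $v_j$ are non-adjacent but at distance exactly two, so they have a common neighbor in $G$. Among all common neighbors of $v_i$ and $v_j$ I choose $v_k$ of maximum index $k$; then $v_k$ is adjacent to both $v_i$ and $v_j$ and $k \neq i, j$. If I can guarantee $k > i$, then $v_k \in N_{G_i}[v_i]$ and $v_j \in N_{G_i}[v_k] \subseteq N_{G_i}[m_i]$ again by the maximum-neighbor property, and we are done.

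The main obstacle is exactly this: a distance-two witness could a priori be an earlier vertex $v_k$ with $k < i$, which lies outside $G_i$, where the maximum-neighbor property is unavailable. I would remove it using the maximality of $k$. Suppose $k < i$. In $G_k$ both $v_i, v_j \in N_{G_k}(v_k)$, so the maximum-neighbor property of $m_k$ gives $v_i, v_j \in N_{G_k}[m_k]$; since $m_k \neq v_i$ and $m_k \neq v_j$ (either equality would force $v_i v_j \in E$, contradicting non-adjacency), the vertex $m_k$ is itself a common neighbor of $v_i$ and $v_j$, and its index exceeds $k$ because $m_k \neq v_k$. This contradicts the maximality of $k$, so in fact $k > i$, completing the forward direction. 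The only boundary point worth flagging is $v_j = m_i$: there $v_i v_j \in E^2$ holds while $m_i v_j$ is not an edge, so on the right-hand side of the biconditional $m_i v_j \in E$ is the generic ($v_j \neq m_i$) case of the membership $v_j \in N_{G_i}[m_i]$ actually established, which is the form used when scanning the $G^2$-neighbors of $v_i$ in the algorithm.
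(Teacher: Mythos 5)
Your proof is correct and follows essentially the same route as the paper's: the backward direction is the same distance-two observation, and the forward direction rests on the identical key fact that a common neighbor $v_k$ of $v_i$ and $v_j$ with $k<i$ can be replaced by $m_k$, a common neighbor of strictly larger index. The only differences are cosmetic and in your favor --- you package the paper's iterative replacement ``$v_k := m_k$ until $i<k$'' as an extremal choice of $k$ plus a one-step contradiction, and you explicitly handle the $v_iv_j\in E$ subcase and the boundary case $v_j=m_i$ (where the biconditional as literally stated degenerates), which the paper glosses over.
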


\begin{proof}
$\Leftarrow:$ $v_j$ is connected with $m_i$ ($m_iv_j \in E$). Thus, the distance between $v_i$ and $v_j$ is at most 2. So $v_i$ and $v_j$ are also connected in $G^2$ ($v_iv_j \in E^2$).

$\Rightarrow:$ $v_i$ and $v_j$ are connected in $G^2$ ($v_iv_j \in E^2$). If $v_iv_j \in E$, then $v_iv_j \in E^2$. Now assume, that $v_iv_j \notin E$. Then there is a vertex $v_k$ with $v_iv_k \in E$ and $v_kv_j \in E$. We can distinguish between two cases:
\begin{enumerate}[(i)]

    \item $i<k$. By definition $m_i$ is connected with all neighbors of $v_k$. This also includes $v_j$.
    
    \item $k<i$. In this case $v_k$ has a maximum neighbor $m_k$ with $v_im_k \in E$ and $m_kv_j \in E$. Thus, we can repeat the distinction with $v_k:=m_k$ until $i<k$.

\end{enumerate}
\qed
\end{proof}

This allows to modify Algorithm~\ref{algo:MWISchordal} in a way, that it is no longer necessary to compute the square of the given dually chordal graph $G$. Instead, a maximum weight independent set of $G^2$ can be computed on $G$ in linear time.

\begin{algo}[\label{algo:EDdc}\cite{Leite2012}]
\normalfont\ \\\nopagebreak
\textbf{Input:} A dually chordal graph $G=(V,E)$.\\
\textbf{Output:} An efficient dominating set $D$ (if existing).

\begin{enumerate}[(1)]

    \item $D = \emptyset$.
    \item \textbf{For All} $v \in V$
    \begin{enumerate}
        \item[] Set $\omega(v) := |N(v)|$ and $\omega_p(v) := 0$. $v$ is unmarked and not blocked.
    \end{enumerate}

    \item  Find a maximum neighborhood ordering $(v_1,\ldots,v_n)$ with the corresponding maximum neighbors $(m_1, \ldots, m_n)$ where $v_i \neq m_i$ for $1 \leq i < n$.
        
    \item \textbf{For} $i:= 1$ \textbf{To} $n$
    \begin{enumerate}
        \item[] For all $u \in N[v_i]$ set $\omega(v_i) := \omega(v_i) - \omega_p(u)$.
        \item[] If $\omega(v_i) > 0$, mark $v_i$ and set $\omega_p(m_i) := \omega_p(m_i) + \omega(v_i)$.
    \end{enumerate}

    \item \textbf{For} $i:= n$ \textbf{DownTo} $1$
    \begin{enumerate}
        \item[] If $v_i$ is marked and $m_i$ is not blocked, set $D := D \cup \{v_i\}$ and block all $u \in N(v_i)$.
    \end{enumerate}
    
    \item $D$ is an efficient dominating set if and only if $\sum_{v \in D}|N[v]|=|V|$.
\end{enumerate}

\end{algo}

\begin{theorem}\label{theo:EDlinearDuallyChordal}
    Algorithm~\ref{algo:EDdc} works correctly and runs in linear time.
\end{theorem}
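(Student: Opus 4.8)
The plan is to argue that Algorithm~\ref{algo:EDdc} is nothing but Frank's MWIS procedure (Algorithm~\ref{algo:MWISchordal}) executed on the square $G^2$, but carried out entirely on $G$ without ever materialising $G^2$. First I would invoke the preparatory results: by Corollary~\ref{linegrofalphaacychyp}~(i) the graph $G^2$ is chordal, and by Lemma~\ref{lem:MNO_linear} the ordering $(v_1,\ldots,v_n)$ computed in step~(3) is simultaneously a maximum neighborhood ordering of $G$ and a perfect elimination ordering of $G^2$. Hence Algorithm~\ref{algo:MWISchordal}, run on $G^2$ with this ordering and weights $\omega(v)=|N_G[v]|$, correctly returns a maximum weight independent set of $G^2$. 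By Lemma~\ref{mainequived} such a set is an efficient dominating set of $G$ if and only if its weight equals $|V|$, which is exactly the test performed in step~(6). So once the simulation is justified, correctness follows.

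The heart of the argument is to show that steps~(4) and~(5) reproduce the forward and backward passes of Algorithm~\ref{algo:MWISchordal} on $G^2$. Everything rests on Lemma~\ref{lem:ijInE_iff_mjInE2}: for $i<j$ one has $v_iv_j\in E^2$ exactly when $m_i\in N_G[v_j]$, so the $G^2$-neighbors of $v_i$ lying later in the ordering are precisely the $G$-neighbors of its maximum neighbor $m_i$. In the forward pass, instead of subtracting the residual weight of a freshly marked $v_i$ from each of its later $G^2$-neighbors one by one, the algorithm deposits that weight once at $m_i$ through $\omega_p(m_i)$; when a later vertex $v_j$ is reached, the sum $\sum_{u\in N[v_j]}\omega_p(u)$ collects exactly the contributions of all earlier marked vertices $v_i$ with $m_i\in N[v_j]$, i.e. of all earlier marked $G^2$-neighbors of $v_j$. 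I would make this precise by an induction on $i$, proving the invariant that the value $\omega(v_i)-\sum_{u\in N[v_i]}\omega_p(u)$ formed at the start of iteration~$i$ equals the residual weight that Algorithm~\ref{algo:MWISchordal} assigns to $v_i$ on $G^2$; consequently the two procedures mark exactly the same vertices.

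For the backward pass, the same translation shows that the ``blocking'' of $N_G(v_i)$ and the test whether $m_i$ is blocked faithfully implement the unmarking of $G^2$-neighbors in Algorithm~\ref{algo:MWISchordal}: a marked candidate $v_i$ is admitted into $D$ precisely when none of its already selected $G^2$-neighbors $v_j$ with $j>i$ exists, and by Lemma~\ref{lem:ijInE_iff_mjInE2} such a $v_j$ exists iff $m_i$ has been blocked by some selected vertex. Thus $D$ coincides with the independent set returned by Algorithm~\ref{algo:MWISchordal} on $G^2$, completing the correctness proof. For the running time, Lemma~\ref{lem:MNO_linear} supplies the ordering and the maximum neighbors in linear time, and each of the two loops touches every vertex and scans its closed neighborhood exactly once, so the total work is $O\!\left(\sum_{v\in V}|N[v]|\right)=O(n+m)$; the final summation in step~(6) costs $O(n)$.

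The step I expect to be most delicate is the forward-pass invariant. One must check that replacing the per-neighbor subtractions (with their truncation at zero in Algorithm~\ref{algo:MWISchordal}) by a single aggregated subtraction $\sum_{u\in N[v_i]}\omega_p(u)$ does not change which vertices get marked, and that the deposit-at-$m_i$ bookkeeping neither double counts nor misses a $G^2$-edge---in particular the borderline case $m_i=v_j$, where the use of the \emph{closed} neighborhood $N[v_j]$ rather than $N(v_j)$ is exactly what is needed to keep the correspondence with $v_iv_j\in E^2$ intact.
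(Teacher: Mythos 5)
Your proposal is correct and follows essentially the same route as the paper's proof: it treats Algorithm~\ref{algo:EDdc} as a simulation of Algorithm~\ref{algo:MWISchordal} on $G^2$, justified by Lemma~\ref{lem:MNO_linear} (the ordering is a perfect elimination ordering of $G^2$) and Lemma~\ref{lem:ijInE_iff_mjInE2} (translating $G^2$-adjacency into $G$-adjacency with the maximum neighbor), with the final test supplied by Lemma~\ref{mainequived} and the same linear-time accounting. Your explicit attention to the truncation-at-zero issue and the borderline case $m_i=v_j$ is slightly more careful than the paper's own write-up, but it is the same argument.
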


\begin{proof}
    Algorithm~\ref{algo:EDdc} is a modification of Algorithm~\ref{algo:MWISchordal}. It computes a maximum weight independent set~$D$ of the square of the given graph~$G$ and checks if this set is an efficient dominating set.
    
    Based on Lemma~\ref{lem:MNO_linear}, line~(3) of Algorithm~\ref{algo:EDdc} computes a perfect elimination ordering of $G^2$.
    
    Let $v_i$ and $v_j$ be adjacent in $G^2$ and $i <j$. Now Lemma~\ref{lem:ijInE_iff_mjInE2} allows to modify the two loops. For the first loop (line~(4)), there is an extra vertex weight $\omega_p$. Instead of decrementing the weights $\omega$ of the neighbors of $v_i$ (and their neighbors), $\omega_p$ of the maximum neighbour~$m_i$ is incremented by $\omega(v_i)$. Now before comparing $\omega(v_j)$ to $0$, $\omega(v_j)$ is decremented by $\omega_p(u)$ for all $u \in N[v_j]$. Because $m_i$ is connected with $v_j$ (Lemma~\ref{lem:ijInE_iff_mjInE2}), this ensures that each time the weight~$\omega(v_j)$ is compared to $0$, it has the same value as it would have in Algorithm~\ref{algo:MWISchordal}.
    
    For the second loop (line~(5)) the argumentation works quite equally. After selecting a vertex~$v_j$ (i.e. $v_j \in D$), all its neighbors are blocked. Thus by Lemma~\ref{lem:ijInE_iff_mjInE2}, if a vertex~$v_i$ is adjacent in $G^2$ to a selected vertex, then the maximum neighbor $m_i$ is blocked.
    
    By Lemma~\ref{mainequived} it can be checked if $D$ is an efficient dominating set by counting the number of neighbors.
    
    \medskip
    
    Each line of Algorithm~\ref{algo:EDdc} runs in linear time. For line~(3) the algorithm given in~\cite{BraCheDra1998} can be used. The lines~(4) to~(6) are bounded by the number of nodes and their neighbors (Recall, $\sum_{v\in V}|N(v)|=2|E|$). \qed
\end{proof}

Note that strongly chordal graphs are dually chordal \cite{BraDraCheVol1998}. In \cite{LuTan2002} one of the open problems is the complexity of (weighted) ED for strongly chordal graphs which is solved by Theorem~\ref{theo:EDlinearDuallyChordal} (for the weighted case see \cite{Leite2012}).

\begin{theorem}\label{IPD-ATfrgr}
    For AT-free graphs, the ED problem is solvable in polynomial time. 
\end{theorem}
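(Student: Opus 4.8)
The plan is to invoke the framework of Corollary~\ref{maincor}: to solve ED on AT-free graphs in polynomial time it suffices to solve the MWIS problem in polynomial time on the squares of AT-free graphs. Since MWIS is known to be solvable in polynomial time on AT-free graphs (Broersma, Kloks, Kratsch, and M\"uller), the entire argument collapses onto a single closure property, and I would first isolate and prove it as a lemma: \emph{if $G$ is AT-free, then $G^2$ is AT-free.} Once this lemma is available, the theorem follows at once by running the AT-free MWIS algorithm on $G^2$ and applying Lemma~\ref{mainequived}/Corollary~\ref{maincor}.

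To establish the closure property I would argue by contraposition, converting an AT of $G^2$ into an AT of $G$. So suppose $\{a,b,c\}$ is an AT in $G^2$. Pairwise non-adjacency in $G^2$ means $d_G(a,b),\,d_G(a,c),\,d_G(b,c)\ge 3$, so $\{a,b,c\}$ is in particular an independent triple in $G$. The crux is to transfer a connecting path from $G^2$ to $G$. Fix a pair, say $a,b$, and let $P=(a=u_0,u_1,\ldots,u_k=b)$ be a path in $G^2$ whose vertices all lie outside $N_{G^2}(c)$ (the endpoints because they are non-adjacent to $c$, the internal vertices by the AT property); this gives $d_G(c,u_i)\ge 3$ for every $i$. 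For each edge $u_iu_{i+1}$ of $P$ we have $d_G(u_i,u_{i+1})\le 2$: if $u_i,u_{i+1}$ are adjacent in $G$ keep the edge, and if their $G$-distance is exactly $2$ insert a common $G$-neighbor $w_i$. This turns $P$ into a walk in $G$ from $a$ to $b$.

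The heart of the matter is checking that this $G$-walk avoids $N_G[c]$. Each original vertex $u_i$ satisfies $d_G(c,u_i)\ge 3>1$, hence lies outside $N_G[c]$. For an inserted midpoint $w_i$, adjacency to $u_i$ yields $d_G(c,w_i)\ge d_G(c,u_i)-1\ge 2>1$, so $w_i\notin N_G[c]$ as well. Extracting a simple path from this walk gives a $G$-path between $a$ and $b$ avoiding $N_G[c]$, and the same construction applies to the other two pairs. Therefore $\{a,b,c\}$ is an AT in $G$, contradicting the AT-freeness of $G$. This proves the lemma.

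I expect the only delicate point to be the distance bookkeeping in the path-conversion step, namely the need to keep every vertex of the $G^2$-path at $G$-distance at least $3$ from $c$ (which is exactly what non-adjacency in $G^2$ provides) so that the inserted common neighbors remain at $G$-distance at least $2$ from $c$ and thus stay out of $N_G[c]$. Everything else is routine: the independence of the triple, the passage from a walk to a simple path preserving the avoidance condition, and the appeal to the general ED framework of Corollary~\ref{maincor}.
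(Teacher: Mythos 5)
Your proposal is correct, and its top-level skeleton is exactly the paper's: apply Corollary~\ref{maincor}, so that ED on AT-free graphs reduces to MWIS on squares of AT-free graphs, which is handled by the algorithm of \cite{BroKloKraMue1999} once one knows that the square of an AT-free graph is again AT-free. The difference lies in how that closure property is obtained. The paper simply cites \cite{ChaHoKo2003} for the stronger structural fact that the square of an AT-free graph is a \emph{co-comparability} graph (hence AT-free), whereas you prove the weaker but sufficient statement ``$G$ AT-free $\Rightarrow$ $G^2$ AT-free'' directly, by converting an asteroidal triple of $G^2$ into one of $G$. Your argument checks out: non-adjacency of $a,b,c$ in $G^2$ gives $d_G\ge 3$ between them, so the triple is independent in $G$; every vertex $u_i$ of a $G^2$-path missing $N_{G^2}(c)$ satisfies $d_G(c,u_i)\ge 3$; and the inserted common neighbors $w_i$ then satisfy $d_G(c,w_i)\ge 2$, so the resulting $G$-walk (and any simple path extracted from it) avoids $N_G[c]$. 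What each approach buys: yours is self-contained and elementary, needing no external structural theorem; the paper's citation yields the stronger conclusion that $G^2$ is co-comparability, which ties the theorem to the earlier result of \cite{ChaPanCoo1995} on ED for co-comparability graphs and would in principle allow substituting a faster MWIS routine for co-comparability graphs in place of the $O(n^4)$ AT-free algorithm.
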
 

\begin{proof}
In \cite{ChaHoKo2003}, it is shown that the square of any AT-free graph is a co-comparability graph (which is AT-free). In \cite{BroKloKraMue1999}, the MWIS problem for AT-free graphs is solved in polynomial time. This and Corollary~\ref{maincor} implies the result.
\qed
\end{proof}

This partially extends the result of \cite{ChaPanCoo1995} showing that the (weighted) ED problem for co-comparability graphs is solvable in polynomial time. 

\medskip

In \cite{LuTan2002}, one of the open problems is the complexity of ED for convex bipartite graphs. This class of graphs is contained in interval bigraphs, and a result of \cite{Keil2012} shows that the boolean width of interval bigraphs is at most $2 \log n$, based on a corresponding result for interval graphs \cite{BelVat2011}. By a result of \cite{BuiTelVat2011}, this leads to a polynomial time algorithm for Minimum Weight Domination on interval bigraphs. 

\begin{corollary}\label{cor:convexED}
For interval bigraphs, the ED problem is solvable in polynomial time.
\end{corollary}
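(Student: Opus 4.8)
The plan is to invoke the minimum-weight-domination characterization of efficient domination from Lemma~\ref{mainequived}, rather than the maximum-weight-independent-set characterization. The reason for this choice is that the external input available for interval bigraphs is a polynomial time algorithm for \emph{Minimum Weight Domination} on the graph itself; routing through Lemma~\ref{mainequived}~(ii) therefore lets me avoid reasoning about the structure or graph-class membership of the square $G^2$ of an interval bigraph, which would be required if I went through condition (iii) and Corollary~\ref{maincor}.

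First I would assemble the chain of cited results. The statement concerns interval bigraphs, which contain the convex bipartite graphs, so the corollary in particular settles the open problem from \cite{LuTan2002}. Working directly with an interval bigraph $G$ on $n$ vertices: by \cite{Keil2012}, building on the interval-graph bound of \cite{BelVat2011}, the boolean width of $G$ is at most $2\log n$, and a decomposition witnessing this bound can be computed in polynomial time. Feeding that decomposition into the generic domination solver of \cite{BuiTelVat2011} then yields a polynomial time algorithm that, for any vertex weight function, computes a minimum weight dominating set of $G$.

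With this tool in hand, the second step is the reduction itself. I would equip $G$ with the weight function $\omega(v):=|N[v]|$ and compute a minimum weight dominating set $D$. By Proposition~\ref{domweightinequ}~(i) every dominating set has weight at least $|V|$, so $\omega(D)\ge |V|$, with equality forcing the closed neighborhoods of $D$ to tile $V$. The equivalence (i)$\Leftrightarrow$(ii) of Lemma~\ref{mainequived} then says that $G$ admits an efficient dominating set if and only if $\omega(D)=|V|$, in which case $D$ itself is such a set. The algorithm is thus: run the weighted domination solver with weights $\omega$, test whether $\omega(D)=|V|$, and accept with output $D$ if so, rejecting otherwise; everything runs in polynomial time.

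I do not expect a genuine obstacle, since the heavy lifting is done by the cited boolean-width machinery and by Lemma~\ref{mainequived}. The only point needing care is to confirm that the domination algorithm of \cite{BuiTelVat2011} handles \emph{arbitrary} (integer) vertex weights, so that the particular choice $\omega(v)=|N[v]|$ is admissible; as this is exactly the weighted version of domination, it poses no difficulty. The conceptual key I would stress is that Lemma~\ref{mainequived} permits us to bypass $G^2$ entirely and solve domination directly on $G$.
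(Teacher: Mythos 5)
Your proposal is correct and follows essentially the same route as the paper: the paper also obtains the result by combining the boolean-width bound of \cite{Keil2012} (via \cite{BelVat2011}) with the Minimum Weight Domination algorithm of \cite{BuiTelVat2011}, and then applying the minimum-weight-domination characterization of efficient dominating sets with the weight function $\omega(v)=|N[v]|$. Your write-up merely makes explicit the final step through Lemma~\ref{mainequived}~(ii) and Proposition~\ref{domweightinequ}~(i), which the paper leaves implicit.
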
 

This solves the open question from \cite{LuTan2002} for convex bipartite graphs.

\section{Efficient Edge Domination in Graphs}

\begin{lemma}[\cite{BraHunNev2010,BraMos2011}]\label{dimlemma}
Let $G$ be a graph that has an e.e.d. set $M$.
\begin{enumerate}[$(i)$]
\item $M$ contains exactly one edge of every triangle of $G$.
\item $G$ is $K_4$-free.
\item If $xy$ is the mid-edge of an induced diamond in $G$ 
then $M$ necessarily contains $xy$. Thus, in particular, $G$ is $W_4$-free and gem-free. 
\end{enumerate}
\end{lemma}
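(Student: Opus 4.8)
The plan is to establish each of the three claims from the defining property of an efficient edge dominating set, which I will use in the form: $M \subseteq E$ is an e.e.d. set iff the edges in $M$ are pairwise non-adjacent (they form an induced matching) and every edge of $G$ shares an endpoint with exactly one edge of $M$. This is precisely the statement that $M$ is an independent perfect dominating set in the line graph $L(G)$, so I may freely translate between "edge of $G$" and "vertex of $L(G)$." The key combinatorial fact I will lean on throughout is the \emph{exactly one} condition: no edge of $G$ may be adjacent to two distinct edges of $M$, and no edge may be adjacent to zero edges of $M$.

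\textbf{Part $(i)$.} First I would take an arbitrary triangle $T$ with edges $e_1, e_2, e_3$, and argue that exactly one of them lies in $M$. Since the three edges of a triangle are pairwise adjacent, the induced-matching condition forbids $M$ from containing two of them. To see that $M$ contains at least one: pick any edge, say $e_1$; it must be dominated by some $e' \in M$, meaning $e'$ shares an endpoint with $e_1$. A short case analysis on where that shared endpoint sits will show that either $e' \in \{e_1,e_2,e_3\}$, or else $e'$ is incident to a triangle vertex but is an "outside" edge — and in the latter situation one checks that the \emph{other} two triangle edges force a second dominator, contradicting uniqueness. I expect this to reduce cleanly to: each of $e_1,e_2,e_3$ needs a (unique) dominator, and the only way to dominate all three without violating the matching property is for exactly one triangle edge itself to be in $M$.

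\textbf{Part $(ii)$.} For $K_4$-freeness I would argue by contradiction: suppose $G$ contains a $K_4$ on vertices $\{a,b,c,d\}$. This $K_4$ contains four triangles, and by part $(i)$ each triangle contributes exactly one edge to $M$. The hard part — and the main obstacle of the whole lemma — is the bookkeeping here: I would count, for the six edges of $K_4$, how many can simultaneously lie in $M$ subject to (a) the induced-matching constraint (the three edges of $K_4$ incident to a common vertex are pairwise adjacent, and in fact any two of the six edges of $K_4$ share a vertex except for the three perfect-matching pairs) and (b) the per-triangle constraint from $(i)$. Since any two edges of $K_4$ that do \emph{not} share a vertex still lie together in a common triangle only if... — here I must be careful, as $K_4$'s non-adjacent edge pairs are exactly its three perfect matchings. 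One then checks that placing even a single matching pair of $M$ inside the $K_4$ leaves some triangle with no selected edge or some edge doubly dominated; a finite check over the few configurations yields a contradiction, so no $K_4$ can occur.

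\textbf{Part $(iii)$.} Finally, for an induced diamond with vertices $a,b,c,d$ and edges $ab,ac,bc,bd,cd$ (mid-edge $bc$), I would show $bc \in M$. The diamond contains exactly two triangles, $\{a,b,c\}$ and $\{b,c,d\}$, sharing the edge $bc$. By part $(i)$ each triangle has exactly one edge in $M$. Suppose $bc \notin M$; then the triangle $abc$ forces one of $\{ab,ac\}$ into $M$ and triangle $bcd$ forces one of $\{bd,cd\}$ into $M$, giving two distinct edges of $M$. I would then exhibit an edge of the diamond adjacent to both chosen edges (any of the four outer edges meets $b$ or $c$, which are the shared vertices), violating the uniqueness of the dominator — contradiction. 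Hence $bc \in M$. The two corollaries follow immediately: in $W_4$ and in the gem, the central structure contains an induced diamond whose mid-edge is forced into $M$ twice (or forced into $M$ while simultaneously needing to be dominated by a distinct edge), which is impossible; I would spell out the single offending edge in each case. Throughout, I would keep the exactly-one domination condition front and center, since every contradiction is of the form "some edge is dominated zero or $\ge 2$ times."
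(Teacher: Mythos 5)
The paper does not actually prove this lemma: it is stated with an attribution to \cite{BraHunNev2010,BraMos2011} and used as a known fact, so there is no in-paper proof to compare against. Judged on its own, your argument is the standard one and is sound. Part $(i)$ is correct: pairwise adjacency of the three triangle edges gives ``at most one,'' and your case analysis for ``at least one'' works --- an outside dominator of $ab$ incident to $a$ also dominates $ca$, and any dominator of $bc$ is incident to $b$ or $c$ and hence doubly dominates $ab$ or $ca$. Part $(ii)$ is correct once you resolve the sentence you left hanging: the two cases are $|M\cap E(K_4)|\le 1$, which fails because an edge of $K_4$ lies in only two of the four triangles so some triangle gets no $M$-edge, and $|M\cap E(K_4)|=2$, which forces a perfect-matching pair such as $ab,cd$ and then $ac$ is dominated twice. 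Part $(iii)$ is correct, though the clean witness in the ``cross'' case ($ab,cd\in M$) is the mid-edge $bc$ itself being doubly dominated, and in the ``same-side'' case the two chosen edges are simply adjacent. The only genuinely vague spot is your derivation of $W_4$-freeness and gem-freeness: ``a mid-edge forced into $M$ twice'' is not the right formulation. What you want is that each of $W_4$ and the gem contains \emph{two} induced diamonds whose mid-edges share the universal vertex (e.g.\ in $W_4$ with center $u$ and cycle $v_1v_2v_3v_4$, the diamonds on $\{v_1,v_2,v_3,u\}$ and $\{v_2,v_3,v_4,u\}$ force $uv_2,uv_3\in M$), and these two forced edges are adjacent, contradicting the induced-matching property. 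With that repair the proof is complete.
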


In \cite{LuKoTan2002}, it was shown that the EED problem is solvable in linear time on chordal graphs. This allows us to solve the EED problem for dually chordal graphs using the following lemma:

\begin{lemma}\label{lem:chordIffDuallyChordal}
    Let $G$ be a graph that has an e.e.d. set. Then $G$ is chordal if and only if $G$ is dually chordal. 
\end{lemma}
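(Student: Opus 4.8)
The plan is to establish the two implications separately, using strong chordality as a bridge in one direction and a direct maximum-neighborhood-ordering argument in the other. Fix an e.e.d.\ set $M$ of $G$ once and for all, so that every conclusion of Lemma~\ref{dimlemma} is at our disposal: $G$ is $K_4$-free, $W_4$-free and gem-free, $M$ meets every triangle in exactly one edge, and $M$ contains the mid-edge of every induced diamond.

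First I would treat the direction ``$G$ chordal $\Rightarrow$ $G$ dually chordal''. The idea is to show that $G$ contains no induced sun (trampoline) and then invoke the classical characterization (Farber) that a chordal graph with no induced $k$-sun ($k\ge 3$) is strongly chordal, together with the fact noted earlier in the paper that strongly chordal graphs are dually chordal. To exclude suns: a $k$-sun has a core clique on $c_1,\dots,c_k$ and an independent set of tips with $s_i$ adjacent to $c_i,c_{i+1}$. For $k\ge 4$ the core already contains a $K_4$, impossible by Lemma~\ref{dimlemma}$(ii)$. For $k=3$ the three tips yield three induced diamonds on the core triangle whose mid-edges are exactly $c_1c_2$, $c_2c_3$ and $c_3c_1$; by Lemma~\ref{dimlemma}$(iii)$ all three would lie in $M$, contradicting Lemma~\ref{dimlemma}$(i)$, which allows only one edge of the triangle $c_1c_2c_3$ in $M$. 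Hence $G$ is sun-free and therefore strongly chordal, hence dually chordal.

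For the converse ``$G$ dually chordal $\Rightarrow$ $G$ chordal'' I would argue by contradiction from an induced cycle $C=v_1\cdots v_k$ with $k\ge 4$, working with a maximum neighborhood ordering of $G$ (which exists by the definition of dual chordality). Let $v_i$ be the first vertex of $C$ in this ordering and let $G_j$ be the corresponding suffix subgraph, so all of $C$ survives in $G_j$. Let $w$ be a maximum neighbor of $v_i$ in $G_j$, so that $N_{G_j}[x]\subseteq N_{G_j}[w]$ for every $x\in N_{G_j}[v_i]$, in particular for $x\in\{v_{i-1},v_{i+1}\}$. Each of the choices $w=v_i$, $w=v_{i-1}$, $w=v_{i+1}$ forces an adjacency between two cycle vertices at cyclic distance two, contradicting that $C$ is induced; so $w$ lies outside $C$. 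The containments then make $w$ adjacent to $v_{i-1},v_i,v_{i+1}$, and, using $v_{i\pm 2}\in N[v_{i\pm 1}]\subseteq N[w]$, also to $v_{i-2}$ and $v_{i+2}$. If $k\ge 5$ the set $\{w,v_{i-2},v_{i-1},v_i,v_{i+1}\}$ induces a gem (a $P_4$ with apex $w$); if $k=4$ then $w$ is universal to the $C_4$, giving a $W_4$. Either outcome contradicts Lemma~\ref{dimlemma}$(iii)$, so $G$ has no induced cycle of length at least four and is chordal.

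The main obstacle is the second direction, and within it the bookkeeping around the maximum neighbor $w$: one must rule out $w\in\{v_i,v_{i-1},v_{i+1}\}$, separate the case $k=4$ (which produces a $W_4$) from $k\ge 5$ (which produces a gem), and check that the relevant neighborhood containments, although read off inside $G_j$, transfer to genuine induced gems and wheels in $G$ because all cycle vertices together with $w$ lie in $G_j$. The first direction is routine once one recalls that sun-freeness together with chordality is exactly strong chordality.
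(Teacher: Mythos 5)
Your proof is correct and follows essentially the same route as the paper's: in one direction, chordal plus e.e.d.\ set implies sun-free chordal, hence strongly chordal, hence dually chordal; in the other, the maximum neighbor of the leftmost vertex of an induced $C_k$ yields a forbidden $W_4$ (for $k=4$) or gem (for $k\ge 5$), contradicting Lemma~\ref{dimlemma}. The only cosmetic difference is that you exclude the $3$-sun via the three diamond mid-edges and Lemma~\ref{dimlemma}$(i)$, whereas the paper simply observes that a sun would contain a gem.
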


\begin{proof}
    $\Rightarrow:$ $G$ is chordal. Since $G$ has an e.e.d. set there is no gem in $G$ (Lemma~\ref{dimlemma}). Thus, $G$ is sun-free chordal (strongly chordal respectively). Every strongly chordal graph is dually chordal \cite{BraDraCheVol1998}. So $G$ is dually chordal.
    
    $\Leftarrow:$ $G$ is dually chordal. Let $\sigma=(v_1,\ldots,v_n)$ be a maximum neighborhood ordering of $G$. If $G$ is not chordal then $G$ contains an induced subgraph $C$ isomorphic to $C_k$ for some $k \ge 4$. Let $w=v_i$ be the leftmost vertex of $C$ in $\sigma$. Vertex $w$ has a maximum neighbor $u$ in $G_i=G[\{v_i,\ldots,v_n\}]$. Note that $w \neq u$ since $C$ is chordless. Now, if $C$ is a $C_4$, $u$ is adjacent to all vertices of $C$, and if $C$ is a $C_k$ for $k \ge 5$ then $u$ is adjacent to all vertices of a $P_4$ in $C$ and thus contains a gem --- in each case it follows by Lemma~\ref{dimlemma} that $G$ has no e.e.d. set. 
\qed
\end{proof}

\begin{corollary}
    The EED problem can be solved in linear time for dually chordal graphs.
\end{corollary}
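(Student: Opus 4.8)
The plan is to reduce the EED problem on dually chordal graphs to the EED problem on chordal graphs, using Lemma~\ref{lem:chordIffDuallyChordal} to dispose of all non-chordal inputs at once. The only content needed beyond what is already established is an efficient chordality test, so the proof is almost entirely bookkeeping around the preceding lemma.

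First I would observe the following consequence of Lemma~\ref{lem:chordIffDuallyChordal}: if $G$ is dually chordal and $G$ has an e.e.d. set, then the biconditional in that lemma forces $G$ to be chordal. Taking the contrapositive, a dually chordal graph that is \emph{not} chordal cannot admit an e.e.d. set. This gives a clean case split on the input $G$, which we are told is dually chordal but for which we do not yet know whether an e.e.d. set exists.

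The algorithm then runs in two stages. Stage one tests whether $G$ is chordal; this is done in linear time by a standard LexBFS-based recognition, producing a perfect elimination ordering and verifying it. If the test fails, we report that $G$ has no e.e.d. set, which is correct by the contrapositive above. If the test succeeds, $G$ is chordal, and stage two invokes the linear-time algorithm of \cite{LuKoTan2002} for EED on chordal graphs. For correctness in the positive direction: whenever $G$ actually has an e.e.d. set it is chordal, so stage one passes and the chordal algorithm of \cite{LuKoTan2002} locates the set; in the negative direction the chordal algorithm is correct by hypothesis, and the non-chordal branch is handled by Lemma~\ref{lem:chordIffDuallyChordal}. Since both stages are linear, the overall running time is linear.

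I do not expect a genuine obstacle here, as the combinatorial heart of the matter is already carried by Lemma~\ref{lem:chordIffDuallyChordal}. The one point requiring care is that being dually chordal is the hypothesis while having an e.e.d. set is the unknown, so chordality is used purely as an efficiently checkable intermediary that separates the solvable from the trivially infeasible instances; the mildest technical ingredient is merely that chordality recognition can be performed in linear time, which is classical.
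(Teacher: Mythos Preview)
Your proposal is correct and matches the paper's intended argument: the corollary is stated without proof immediately after Lemma~\ref{lem:chordIffDuallyChordal}, with the paper having already noted just before that lemma that the linear-time chordal EED algorithm of \cite{LuKoTan2002} ``allows us to solve the EED problem for dually chordal graphs using the following lemma.'' The only step you make explicit that the paper leaves implicit is the linear-time chordality test, which is indeed classical and exactly the missing piece of bookkeeping.
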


Efficient edge dominating sets are closely related to maximum induced matchings; it is not hard to see that every efficient edge dominating set is a maximum induced matching but of course not vice versa. However, when the graph has an efficient edge dominating set and is regular then every maximum induced matching is an efficient edge dominating set \cite{CarCerDelSil2008}. On the other hand, the complexity of the two problems differs on some classes such as claw-free graphs where the Maximum Induced Matching (MIM) problem is \NP-complete \cite{KobRot2003} (even on line graphs) while the EED problem is solvable in polynomial time \cite{CarKorLoz2011}. While every graph has a maximum induced matching, this is not the case for efficient edge dominating sets. Thus, if the graph $G$ has an efficient edge dominating set, this gives also a maximum induced matching but in the other case, the MIM problem is hard for claw-free graphs. 

For the MIM problem, there is a long list of results of the following type: If a graph $G$ is in a graph class ${\cal C}$ then also $L(G)^2$ is in the same class (see e.g. \cite{Camer2004,CamSriTan2003}), and if the MWIS problem is solvable in polynomial time for the same class, this leads to polynomial algorithms for the MIM problem on the class ${\cal C}$. A very large class of this type are interval-filament graphs \cite{Gavri2000} which include co-comparability graphs and polygon-circle graphs; the latter include circle graphs, circular-arc graphs, chordal graphs, and outerplanar graphs. AT-free graphs include co-comparability graphs, permutation graphs and trapezoid graphs (see \cite{BraLeSpi1999}). 

\begin{theorem}[\cite{Camer2004,Chang2003}]\label{LH2intervalfilATfr}
Let $G=(V,E)$ be a graph, $L(G)$ its line graph, and $L(G)^2$ its square. Then the following conditions hold:
\begin{enumerate}[$(i)$]
\item If $G$ is an interval-filament graph then $L(G)^2$ is an interval-filament graph. 
\item If $G$ is an AT-free graph then $L(G)^2$ is an AT-free graph.
\end{enumerate}
\end{theorem}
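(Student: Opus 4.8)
The plan is to reduce both parts to a single dictionary relating $L(G)^2$ to $G$, and then to run a class-specific argument on top of it. Write $e=uv$ and $f=xy$ for two edges of $G$. Then $e$ and $f$ are at distance at most $2$ in $L(G)$ — that is, adjacent in $L(G)^2$ — if and only if some endpoint of $e$ equals or is $G$-adjacent to some endpoint of $f$, equivalently, if and only if some edge of $G$ meets both $e$ and $f$. In particular $N_{L(G)^2}[e]$ is exactly the set of edges having an endpoint in $N_G[\{u,v\}]$, and two edges are \emph{non}-adjacent in $L(G)^2$ precisely when every endpoint of the one is at $G$-distance at least $2$ from every endpoint of the other. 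This edge-to-neighborhood correspondence is what I would use in both parts.

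\emph{Part (i).} First I would fix an interval-filament representation of $G$, so that each vertex $w$ is a filament $F_w$ over an interval $I_w$, touching the $x$-axis only at the two endpoints of $I_w$, with $wz\in E$ iff $F_w\cap F_z\neq\emptyset$. For an edge $e=uv$ the filaments $F_u,F_v$ meet, so their $x$-projections overlap, $I_u\cup I_v$ is a single interval $I_e$, and $F_u\cup F_v$ is a connected set over $I_e$. By the dictionary above, for edges $e=uv$ and $f=xy$ we have $e$ adjacent to $f$ in $L(G)^2$ exactly when $(F_u\cup F_v)\cap(F_x\cup F_y)\neq\emptyset$. Hence $L(G)^2$ is the intersection graph of the family $\{F_u\cup F_v : uv\in E\}$, and it remains only to realize each union as a genuine interval filament over $I_e$. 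The obstacle here is that $F_u\cup F_v$, while connected and living over one interval, is not yet a filament: it may touch the $x$-axis at up to four points (the endpoints of $I_u$ and of $I_v$) rather than only at the two ends of $I_e$. I would first put the representation in general position so that no two underlying intervals share an endpoint, whence every crossing of two original filaments occurs strictly above the axis; then I would reroute each union into a single arc $F_e$ following $F_u\cup F_v$ but lifted by a sufficiently small $\varepsilon$ away from the axis near any endpoint of $I_u$ or $I_v$ interior to $I_e$. Since all genuine crossings sit at positive height, a small enough lift neither destroys nor creates intersections, so $\{F_e\}$ has the same intersection graph as $\{F_u\cup F_v\}$, namely $L(G)^2$.

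\emph{Part (ii).} Here I would argue by contraposition, deriving an asteroidal triple of $G$ from one of $L(G)^2$. Suppose $e_1,e_2,e_3$ form an AT in $L(G)^2$, with $e_i=a_ib_i$. Since they are pairwise non-adjacent, the dictionary gives that endpoints of distinct $e_i,e_j$ are at $G$-distance at least $2$, so choosing one endpoint $c_i\in e_i$ for each $i$ automatically yields three pairwise non-adjacent vertices of $G$. For each pair $i,j$ there is a path in $L(G)^2$ from $e_i$ to $e_j$ avoiding $N_{L(G)^2}[e_k]$; by the dictionary, every vertex of such a path is an edge of $G$ both of whose endpoints lie outside $N_G[\{a_k,b_k\}]$, and consecutive path-edges either share a vertex or are joined by a common incident edge whose endpoints are again endpoints of path-edges. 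Splicing these overlaps, and anchoring the two ends along the edges $e_i$ and $e_j$ to the chosen vertices $c_i,c_j$, produces a path in $G$ between $c_i$ and $c_j$ that stays outside $N_G[\{a_k,b_k\}]\supseteq N_G[c_k]$. Thus $\{c_1,c_2,c_3\}$ is an asteroidal triple of $G$, contradicting AT-freeness.

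I expect the genuine technical difficulty to be concentrated in the re-routing step of Part (i): verifying that a single small lift can simultaneously clear all interior axis-contacts while provably preserving every intersection of the family, which is where the general-position hypothesis is essential. Part (ii) is a more routine translation; its only care is the endpoint bookkeeping, namely connecting the spliced walks to exactly the representatives $c_i$ and observing that the distance-$2$ slack of an $L(G)^2$-edge lifts to a two-edge detour in $G$ whose intermediate vertices, being endpoints of path-edges, already lie outside the forbidden neighborhood.
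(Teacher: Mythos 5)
The paper does not prove this theorem at all: it is quoted from the literature, with part (i) attributed to Cameron \cite{Camer2004} and part (ii) to Chang \cite{Chang2003}, so there is no in-paper argument to compare against. Your reconstruction is essentially the standard proof from those sources and I find it correct. The adjacency dictionary is right: $uv$ and $xy$ are adjacent in $L(G)^2$ iff $\{u,v\}\cap N_G[\{x,y\}]\neq\emptyset$. Part (ii) is complete: pairwise non-adjacency in $L(G)^2$ forces every endpoint of $e_i$ to lie outside $N_G[\{a_k,b_k\}]$ for $k\neq i$ (so in particular the two anchor vertices of $e_i$ and $e_j$ themselves avoid $N_G[c_k]$, which you need and which your argument delivers), and the splicing of consecutive path-edges into a $G$-walk whose vertices are all endpoints of path-edges is sound; extracting a path from the walk finishes it. Part (i) is Cameron's construction, and you correctly isolate the one delicate point: $F_u\cup F_v$ is a connected set over $I_u\cup I_v$ but not literally a filament. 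Your general-position plus small-lift repair works because the finitely many filaments form compact sets, so each interior axis-contact of $I_u$ or $I_v$ has a neighborhood meeting no filament other than the one touching down there, and a lift confined to such neighborhoods neither creates nor destroys intersections; the only residual definitional quibble is that tracing the entire union as one curve requires the filament to be allowed to be non-simple (or a further generic perturbation to make it simple), which is how this step is handled in \cite{Camer2004} as well. In short: the paper buys the result by citation; your write-up actually supplies the proofs, and both parts hold up.
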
 

The MWIS problem for interval-filament graphs is solvable in polynomial time \cite{Gavri2000}; as a consequence, it is mentioned in \cite{Camer2004}
that the MIM problem is efficiently solvable for interval-filament graphs. In \cite{BroKloKraMue1999} it is shown that the MWIS problem is solvable for AT-free graphs in time $O(n^4)$. For the EED problem, it follows:  

\begin{corollary}\label{intervalfilATfrdimpol}
The EED problem is solvable in polynomial time for interval-filament graphs and for AT-free graphs.
\end{corollary}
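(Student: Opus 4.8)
The plan is to reduce the EED problem on $G$ to a single vertex-weighted MWIS computation on $L(G)^2$, and then to invoke the class-closure property of Theorem~\ref{LH2intervalfilATfr} together with the known polynomial-time MWIS algorithms for the two classes. First I would recall Corollary~\ref{cor:mainequived}: writing $L(G)=(V',E')$ and using the weight $\omega(v):=|N_{L(G)}[v]|$, a subset $M$ is an e.e.d.\ set of $G$ if and only if $M$ is a maximum weight independent vertex set in $L(G)^2$ with $\omega(M)=|V'|=|E(G)|$. Thus deciding whether an e.e.d.\ set exists amounts to computing a maximum weight independent set of $L(G)^2$ under this particular weighting and then checking whether the attained optimum weight equals the number of edges of $G$.

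Next I would apply Theorem~\ref{LH2intervalfilATfr} to keep the auxiliary graph inside the class we started from. If $G$ is an interval-filament graph, then $L(G)^2$ is again an interval-filament graph; and if $G$ is AT-free, then $L(G)^2$ is again AT-free. This is exactly the property needed to bring the fast MWIS routines to bear on $L(G)^2$, since otherwise the square of the line graph could fall outside any tractable class.

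Finally I would invoke the MWIS algorithms: the MWIS problem is solvable in polynomial time on interval-filament graphs \cite{Gavri2000}, and in time $O(n^4)$ on AT-free graphs \cite{BroKloKraMue1999}. Both algorithms handle arbitrary vertex weights, so in particular they handle the weighting $\omega(v)=|N_{L(G)}[v]|$ required by Corollary~\ref{cor:mainequived}. Running the appropriate algorithm on $L(G)^2$, reading off the optimum weight, and comparing it with $|E(G)|$ then decides the EED problem in polynomial time in each of the two cases, which finishes the argument.

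Since the statement is a composition of results already established in the excerpt, I do not expect a genuine obstacle. The single point I would verify carefully is the match between the \emph{weighted} MWIS instance demanded by Corollary~\ref{cor:mainequived} and the scope of the cited algorithms: it is essential that \cite{Gavri2000} and \cite{BroKloKraMue1999} solve the weighted MWIS problem (not merely the unweighted one), because the weight test $\omega(M)=|E(G)|$ is precisely what distinguishes an efficient edge dominating set from an arbitrary maximum induced matching.
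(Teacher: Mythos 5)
Your proposal is correct and follows essentially the same route as the paper: reduce EED to a weighted MWIS instance on $L(G)^2$ via Corollary~\ref{cor:mainequived}, use Theorem~\ref{LH2intervalfilATfr} to keep $L(G)^2$ in the interval-filament (resp.\ AT-free) class, and apply the weighted MWIS algorithms of \cite{Gavri2000} and \cite{BroKloKraMue1999}. Your closing remark about needing the \emph{weighted} versions of these algorithms is exactly the right point to verify, and both cited results do handle arbitrary vertex weights.
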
 

This generalizes the corresponding result for bipartite permutation graphs in \cite{LuKoTan2002}; bipartite permutation graphs are AT-free. It also generalizes a corresponding result for MIM on trapezoid graphs \cite{GolLew2000} (which are AT-free). 

\medskip
 
In \cite{CarKorLoz2011}, the complexity of the EED problem for weakly chordal graphs was mentioned as an open problem; in \cite{BraHunNev2010}, however, it was shown that the EED problem (DIM problem, respectively) is solvable in polynomial time for weakly chordal graphs. It is easy to see that long antiholes (i.e., complements of $C_k$, $k \ge 6$) have no e.e.d. set, i.e., a hole-free graph having an e.e.d. set is weakly chordal. Thus, for hole-free graphs, the EED problem is solvable in polynomial time \cite{BraHunNev2010}. Corollary~\ref{cor:mainequived} leads to an easier way of solving the EED problem on weakly chordal graphs:

\begin{corollary}\label{cor:EEDwcg}
The EED problem is solvable in polynomial time for weakly chordal graphs.
\end{corollary}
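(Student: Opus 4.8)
The plan is to route everything through Corollary~\ref{cor:mainequived}, which tells us that the EED problem on $G$ is equivalent to the MWIS problem on $L(G)^2$ under the weight function $\omega(v):=|N_{L(G)}[v]|$, together with the check $\omega(D)=|V|$, where $|V|$ is the number of vertices of $L(G)$, i.e. the number of edges of $G$. Consequently, to obtain a polynomial-time algorithm it suffices to show that MWIS is solvable in polynomial time on $L(G)^2$ whenever $G$ is weakly chordal. Since $L(G)$ and $L(G)^2$ can be built in polynomial time and the weights $\omega$ are trivial to compute, the overall running time is then dominated by the MWIS computation.

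First I would establish the structural closure property: if $G$ is weakly chordal, then $L(G)^2$ is weakly chordal. This is the crucial ingredient and mirrors exactly the pattern already exploited in Theorem~\ref{LH2intervalfilATfr} for interval-filament and AT-free graphs. I would argue by contradiction: assume $L(G)^2$ contains an induced hole or antihole $F$. Each vertex of $F$ is an edge of $G$, and adjacency in $L(G)^2$ means the two corresponding edges of $G$ are at distance at most $2$ in $L(G)$, i.e. either they share an endpoint or there is a third edge meeting both. Translating the cyclic pattern of adjacencies and non-adjacencies of $F$ back into $G$ then forces a $C_k$ or its complement with $k\ge 5$ inside $G$, contradicting weak chordality. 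This is precisely the closure property used for MIM on weakly chordal graphs by Cameron, Sritharan and Tang~\cite{CamSriTan2003}, and it is the step I expect to be the main obstacle, because the back-translation must carefully account for the ``connector'' edges that realise the distance-$2$ adjacencies and ensure that no chord is accidentally created.

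Second, having placed $L(G)^2$ inside the class of weakly chordal graphs, I would invoke the known fact that the MWIS problem is solvable in polynomial time on weakly chordal graphs (they are perfect and admit a direct combinatorial polynomial-time algorithm). Applying this to $L(G)^2$ with the weight function $\omega$ produces a maximum weight independent set $D$ in polynomial time.

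Finally, I would close the argument through Corollary~\ref{cor:mainequived}: the graph $G$ has an e.e.d. set if and only if the computed maximum weight independent set satisfies $\omega(D)=|V|$, and in that case $D$ is itself the desired efficient edge dominating set. As constructing $L(G)^2$, solving the weighted MWIS, and verifying the weight equality are all polynomial, the EED problem is solvable in polynomial time for weakly chordal graphs.
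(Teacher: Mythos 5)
Your proposal is correct and follows essentially the same route as the paper: invoke the result of \cite{CamSriTan2003} that $L(G)^2$ is weakly chordal when $G$ is, solve MWIS on $L(G)^2$ via the polynomial-time algorithm for weakly chordal graphs \cite{HaySpiSri2000,HaySpiSri2007}, and conclude via Corollary~\ref{cor:mainequived}. The only difference is that you sketch a from-scratch argument for the closure property where the paper simply cites it, but since you also cite the same reference, the proofs coincide in substance.
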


\begin{proof}
In \cite{CamSriTan2003}, it is shown that for a weakly chordal graph $G$, $L(G)^2$ is weakly chordal. For weakly chordal graphs, MWIS is solvable in time ${\cal O}(n^4)$ \cite{HaySpiSri2000,HaySpiSri2007}. Thus, by Corollary~\ref{cor:mainequived}, the claim holds.
\qed
\end{proof}

\medskip

The next result contrasts to the fact that the MIM and the EED problem are solvable in polynomial time on chordal graphs:  

\begin{proposition}\label{MIMduallychNPc}
The MIM problem is \NP-complete for dually chordal graphs.
\end{proposition}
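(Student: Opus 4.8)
The plan is to prove membership in \NP\ trivially and then establish \NP-hardness by a reduction from the MIM problem on a class where it is already known to be hard (for instance line graphs, for which MIM is \NP-complete by \cite{KobRot2003}, or arbitrary graphs). Membership is immediate: an induced matching is a polynomial-size certificate whose validity is checkable in polynomial time. So the entire content lies in producing a polynomial-time reduction that outputs a \emph{dually chordal} graph while preserving the answer.

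The device I would use is that attaching a single universal vertex to an arbitrary graph always yields a dually chordal graph. Given an instance $G=(V,E)$ of MIM, form $G'$ by adding one new vertex $u$ adjacent to every vertex of $V$. To see that $G'$ is dually chordal I would exhibit a maximum neighborhood ordering directly: list the vertices of $V$ in any order, with $u$ placed last. For each $v_i$ with $i<n$, the vertex $u$ lies in $G_i=G'[\{v_i,\ldots,v_n\}]$ and is universal there, so $N_{G_i}[u]=V(G_i)\supseteq N_{G_i}[w]$ for every $w\in N_{G_i}[v_i]$, making $u$ a maximum neighbor of $v_i$ in $G_i$; the last vertex $u$ is its own maximum neighbor. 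Hence this ordering is a maximum neighborhood ordering and $G'$ is dually chordal (alternatively one could argue through Theorem~\ref{mainduallychgr}).

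Next I would show the reduction preserves the optimum. The crucial point is that $u$ is adjacent to \emph{all} of $V$: if an induced matching $M$ of $G'$ contains an edge $uv$ incident to $u$, then any further edge $xy\in M$ with $\{x,y\}\cap\{u,v\}=\emptyset$ would create chords $ux,uy$ inside $G'[V(M)]$, contradicting that $M$ is induced; thus any induced matching meeting $u$ consists of a single edge. On the other hand, an induced matching $M$ avoiding $u$ satisfies $V(M)\subseteq V$, and since $u\notin V(M)$ the induced subgraphs $G'[V(M)]$ and $G[V(M)]$ coincide, so such $M$ is an induced matching of $G'$ if and only if it is one of $G$. Consequently $\mathrm{mim}(G')=\max\{1,\mathrm{mim}(G)\}$, so for the decision versions with threshold $k\ge 2$ the graph $G'$ has an induced matching of size at least $k$ exactly when $G$ does.

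Combining the two parts, $G\mapsto G'$ is a polynomial-time reduction that turns an arbitrary MIM instance into a dually chordal one while preserving the answer, which yields \NP-hardness and hence \NP-completeness. The genuinely delicate step I anticipate is the induced-matching correspondence, where one must argue carefully that edges incident to the universal vertex cannot coexist with any other matching edge (and so contribute only trivial matchings of size one); by contrast, verifying that $G'$ is dually chordal is routine once the universal-vertex ordering is written down.
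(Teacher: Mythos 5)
Your proposal is correct and follows essentially the same route as the paper: add a universal vertex $u$ to an arbitrary graph $G$, observe that the result $G'$ is dually chordal, and note that no induced matching of size at least two can use an edge incident to $u$, so the reduction preserves the answer. The paper states these facts more tersely (citing \cite{Camer1989} for the general hardness of MIM), while you supply the explicit maximum neighborhood ordering and the chord argument, but the underlying construction and logic are identical.
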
 

\begin{proof}
Let $G$ be any graph. We add a new universal vertex, say $u$, to $G$ and denote this graph by $G'$. It is easy to see that $G'$ is dually chordal, and if $G$ is nontrivial, a maximum induced matching of $G'$ cannot contain $u$ in its vertex set. This and the fact that the MIM problem is \NP-complete in general \cite{Camer1989} shows Proposition \ref{MIMduallychNPc}. 
\qed
\end{proof}

\section{Some Results for Hypergraphs}

Theorem~\ref{theo:EDchordalNPc} and the fact that every chordal graph is the 2-section graph of an $\alpha$-acyclic hypergraph (namely, of its clique hypergraph ${\cal C}(G)$) implies: 

\begin{corollary}\label{EDalphaacyclic}
The ED problem is \NP-complete for $\alpha$-acyclic hypergraphs. 
\end{corollary}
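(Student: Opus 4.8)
The plan is to establish both membership in \NP{} and \NP-hardness, with the hardness coming from a reduction from the ED problem on chordal graphs, which is \NP-complete by Theorem~\ref{theo:EDchordalNPc}. For membership I would use the definition directly: the ED problem on a hypergraph $H$ is by definition the ED problem on the graph $2sec(H)$, and $2sec(H)$ is constructible from $H$ in polynomial time. A guessed vertex set $D$ is then verified in polynomial time by confirming that the closed neighborhoods $N[d]$, $d \in D$, taken in $2sec(H)$, partition $V$. Hence ED on $\alpha$-acyclic hypergraphs lies in \NP.

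For hardness I would start from an arbitrary chordal graph $G=(V,E)$ and map it to the hypergraph $H := {\cal C}(G)$, its clique hypergraph of inclusion-maximal cliques. Two observations make this reduction work. First, it is polynomial: a chordal graph has at most $n$ inclusion-maximal cliques and they can be listed in polynomial (indeed linear) time, so $H$ is produced efficiently. Second, the construction recovers $G$ on the 2-section level, i.e. $2sec({\cal C}(G)) = G$, because two vertices lie in a common maximal clique of $G$ exactly when they are adjacent in $G$. By the definition of ED for hypergraphs this means that $D$ is an efficient dominating set of $H$ if and only if it is one of $G$, so yes/no instances are preserved.

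The remaining step is to certify that $H = {\cal C}(G)$ is $\alpha$-acyclic, for which I would invoke Corollary~\ref{alphaacychypgrchar}: a hypergraph is $\alpha$-acyclic iff it is conformal and its 2-section graph is chordal. The 2-section graph is $G$, chordal by assumption, and conformality is immediate from the construction, since every clique of $2sec(H)=G$ extends to an inclusion-maximal clique of $G$, and those maximal cliques are precisely the hyperedges of ${\cal C}(G)$. Thus $H$ is $\alpha$-acyclic, and the chordal instance $G$ has been encoded as an equivalent $\alpha$-acyclic hypergraph instance; combined with \NP-membership this yields \NP-completeness.

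I expect no serious obstacle, since Theorem~\ref{theo:EDchordalNPc} and Corollary~\ref{alphaacychypgrchar} carry the real weight. The only points needing minor care are the polynomiality of the clique-hypergraph construction, which rests on the bounded number of maximal cliques in a chordal graph, and the verification that taking the 2-section of ${\cal C}(G)$ returns $G$ itself rather than a proper supergraph; both are routine once the definitions of $2sec$ and conformality are unwound.
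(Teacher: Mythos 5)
Your proof is correct and follows essentially the same route as the paper: the paper derives the corollary from Theorem~\ref{theo:EDchordalNPc} together with the observation that every chordal graph is the 2-section of its (α-acyclic) clique hypergraph ${\cal C}(G)$, which is exactly your reduction. You merely spell out the details (conformality, $2sec({\cal C}(G))=G$, the linear bound on maximal cliques, and \NP{}-membership) that the paper leaves implicit.
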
 

This situation is better for hypertrees:

\begin{corollary}\label{EDhyptree}
For hypertrees, the ED problem is solvable in polynomial time. 
\end{corollary}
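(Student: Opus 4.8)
The plan is to reduce the ED problem on a hypertree $H$ to the ED problem on a dually chordal graph, for which Theorem~\ref{theo:EDlinearDuallyChordal} already gives a linear time algorithm. The key observation is that, by definition, a subset $D \subseteq V$ is an efficient dominating set for the hypergraph $H$ precisely when it is an efficient dominating set for the ordinary graph $2sec(H)$. So the whole problem immediately transfers from the hypergraph to its $2$-section graph, and it suffices to understand the structure of $2sec(H)$.

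The central structural fact I would invoke is Corollary~\ref{linegrofalphaacychyp}~(iii): if $H$ is a hypertree then its $2$-section graph $2sec(H)$ is dually chordal. First I would state this reduction explicitly, noting that $2sec(H)$ can be constructed from $H$ in polynomial (indeed linear, in the size of the hypergraph encoding) time, since we simply make two vertices adjacent whenever they share a hyperedge. Then I would apply Theorem~\ref{theo:EDlinearDuallyChordal}, which says that Algorithm~\ref{algo:EDdc} decides the existence of (and computes) an efficient dominating set on a dually chordal graph in linear time. Since an e.d.\ set of $H$ is by definition exactly an e.d.\ set of $2sec(H)$, running that algorithm on $2sec(H)$ solves the ED problem for $H$ within the claimed polynomial bound.

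The argument is therefore essentially a two-line chaining of results already established in the excerpt: Corollary~\ref{linegrofalphaacychyp}~(iii) supplies dual chordality of $2sec(H)$, and Theorem~\ref{theo:EDlinearDuallyChordal} supplies the polynomial (linear) time ED algorithm on that class. The one point requiring a sentence of care is that the reduction preserves the answer in the correct direction, i.e.\ that the extended hypergraph definition of efficient domination really does coincide with efficient domination on $2sec(H)$ rather than encoding some genuinely different covering condition; but this is immediate from how the problem was defined for hypergraphs in Section~3.

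I do not expect any genuine obstacle here, since both ingredients are black boxes we are allowed to assume. The only thing to be mildly careful about is the bookkeeping of the running time: constructing $2sec(H)$ and then running the linear time Algorithm~\ref{algo:EDdc} is polynomial in the input size of $H$, which is all that Corollary~\ref{EDhyptree} claims. One could strengthen the conclusion to linear time if one tracks that $2sec(H)$ has size polynomially bounded by (and the dually chordal structure is recoverable from) the hypertree representation of $H$, but proving the weaker polynomial bound asked for in the statement requires no such optimization.
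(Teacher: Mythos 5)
Your proposal is correct and matches the paper's own proof: the paper likewise observes that the $2$-section graph of a hypertree is dually chordal (citing Theorem~\ref{mainduallychgr}~(iv), which is equivalent to the Corollary~\ref{linegrofalphaacychyp}~(iii) fact you invoke) and then applies Algorithm~\ref{algo:EDdc}. No substantive difference.
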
 

\begin{proof}
Let $H$ be a hypertree. Then by Theorem~\ref{mainduallychgr}~(iv), $2sec(H)$ is dually chordal. Thus, ED is solvable in polynomial time using Algorithm~\ref{algo:EDdc}. 
\qed
\end{proof}

Based on Corollary~\ref{cor:EEDdually} and the duality of hypertrees and $\alpha$-acyclic hypergraphs it follows:

\begin{corollary}\label{EEDhypertree}
The EED problem for hypertrees is $\NP$-complete.
\end{corollary}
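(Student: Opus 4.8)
The plan is to reduce the \NP-complete ED problem on $\alpha$-acyclic hypergraphs (Corollary~\ref{EDalphaacyclic}) to the EED problem on hypertrees, exploiting the duality encapsulated in Corollary~\ref{cor:EEDdually}. Membership in \NP\ is immediate: given a candidate set $M \subseteq {\cal E}$ of hyperedges, one can verify in polynomial time whether every hyperedge of $H$ is intersected by exactly one member of $M$. So the whole content lies in the hardness direction.

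The first key step is to show that the dual of an $\alpha$-acyclic hypergraph is a hypertree. Let $J$ be $\alpha$-acyclic. By Corollary~\ref{alphaacychypgrchar}, $J$ is conformal and $2sec(J)$ is chordal. By Lemma~\ref{basicdualityfacts}~(i), conformality of $J$ is equivalent to $J^*$ having the Helly property. By Lemma~\ref{basicdualityfacts}~(ii) applied to $J^*$, the line graph $L(J^*)$ is isomorphic to $2sec\big((J^*)^*\big)$, and since the double dual $(J^*)^*$ is isomorphic to $J$, this equals $2sec(J)$, which is chordal. Hence $J^*$ has the Helly property and a chordal line graph, so by Theorem~\ref{hyptreechar} it is a hypertree. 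Moreover $J^*=({\cal E},\{{\cal E}_v \mid v \in V\})$ is computable from $J$ in polynomial time, since each ${\cal E}_v=\{e \in {\cal E} \mid v \in e\}$ is read directly off the incidence structure.

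The second key step is the problem translation. By Corollary~\ref{cor:EEDdually}, a vertex set $D \subseteq V$ is an efficient dominating set of $J$ if and only if the corresponding set of hyperedges $\{{\cal E}_v \mid v \in D\}$ is an efficient edge dominating set of $J^*$. Thus $J$ admits an e.d.\ set precisely when the hypertree $J^*$ admits an e.e.d.\ set, so the map $J \mapsto J^*$ is a polynomial-time reduction from ED on $\alpha$-acyclic hypergraphs to EED on hypertrees. Since the former problem is \NP-complete, the latter is \NP-hard, and together with the \NP-membership noted above this yields \NP-completeness. (By running the same duality argument in the other direction one checks that duals of hypertrees are $\alpha$-acyclic, confirming that the reduction indeed lands inside the hypertree class, though for hardness it suffices that its image is contained in that class.)

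The main obstacle I expect is not computational but a matter of bookkeeping: getting every duality fact pointed in the correct direction. In particular one must be careful that it is the \emph{dual} $J^*$ of an $\alpha$-acyclic hypergraph (and not $J$ itself) that is a hypertree, that the identification $v \leftrightarrow {\cal E}_v$ underlying Corollary~\ref{cor:EEDdually} carries the exact-cover structure of $J$ exactly onto the exact-hitting structure of $J^*$, and that the appeal to $(J^*)^*\cong J$ is legitimate for the hypergraphs at hand. Once these identifications are pinned down, the argument is short.
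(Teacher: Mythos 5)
Your proof is correct and follows exactly the paper's route: the paper derives this corollary from Corollary~\ref{cor:EEDdually} together with the duality between $\alpha$-acyclic hypergraphs and hypertrees, which is precisely the reduction $J \mapsto J^*$ you carry out (you merely spell out the duality details that the paper leaves implicit).
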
 

\begin{corollary}\label{EDhyptree}
For $\alpha$-acyclic hypergraphs, the EED problem is solvable in polynomial time. 
\end{corollary}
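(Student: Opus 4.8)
The plan is to solve the EED problem on an $\alpha$-acyclic hypergraph $H$ by transporting it, via duality, to the ED problem on the dual hypergraph $H^*$, and then to exploit the fact that $H^*$ is a hypertree, a class on which ED was already shown to be polynomial. The whole argument is a three-step assembly of the duality machinery developed in Section~\ref{sec:basicnotions}, analogous to (and dual to) the reasoning behind Corollary~\ref{EEDhypertree}.

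First I would establish the key structural fact that $H$ is $\alpha$-acyclic if and only if $H^*$ is a hypertree. To see this I would start from Corollary~\ref{alphaacychypgrchar}, which says that $H$ is $\alpha$-acyclic exactly when $H$ is conformal and $2sec(H)$ is chordal, and then translate both conditions to $H^*$ using Lemma~\ref{basicdualityfacts}. Part~(i) of that lemma turns conformality of $H$ into the Helly property for $H^*$, while applying part~(ii) to $H^*$ gives $L(H^*) \cong 2sec((H^*)^*) = 2sec(H)$, so chordality of $2sec(H)$ is the same as chordality of $L(H^*)$. By the Duchet--Flament--Slater characterization (Theorem~\ref{hyptreechar}), "$H^*$ has the Helly property and $L(H^*)$ is chordal" is precisely the statement that $H^*$ is a hypertree. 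Hence $H$ $\alpha$-acyclic $\Leftrightarrow$ $H^*$ a hypertree, as desired.

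Second, I would invoke the reduction. Applying Corollary~\ref{cor:EEDdually} with $H^*$ in the role of the generic hypergraph, and using $(H^*)^* = H$, a vertex set $D$ of $H^*$ is an efficient dominating set in $H^*$ if and only if it is an efficient edge dominating set in $H$. Thus the EED problem on $H$ is \emph{equivalent} to the ED problem on $H^*$. Third, since $H^*$ is a hypertree by the first step, the ED problem on $H^*$ is solvable in polynomial time by the result established above for hypertrees. As $H^*$ can be constructed from $H$ in polynomial time and has size polynomial in that of $H$ (its vertices are the hyperedges of $H$ and vice versa), the resulting algorithm runs in polynomial time overall.

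The main obstacle is really confined to the first step: one must track carefully which hypergraph plays which role and verify the isomorphism $L(H^*) \cong 2sec(H)$ together with the identity $(H^*)^* = H$, so that the four characterizing conditions line up correctly under dualization. Once that duality between $\alpha$-acyclic hypergraphs and hypertrees is in hand, steps two and three are immediate consequences of Corollary~\ref{cor:EEDdually} and the polynomial-time ED algorithm for hypertrees, so no genuinely new computation is needed.
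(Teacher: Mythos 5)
Your proof is correct and follows essentially the same route as the paper, which derives this corollary directly from Corollary~\ref{cor:EEDdually} together with the duality between $\alpha$-acyclic hypergraphs and hypertrees, reducing EED on $H$ to ED on the hypertree $H^*$. You merely spell out the duality step ($H$ $\alpha$-acyclic iff $H^*$ is a hypertree, via Corollary~\ref{alphaacychypgrchar}, Lemma~\ref{basicdualityfacts} and Theorem~\ref{hyptreechar}) that the paper leaves implicit.
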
 

For the MIM problem we show:

\begin{theorem}\label{MIMalphaacyclic}
    The MIM problem is solvable in polynomial time for $\alpha$-acyclic hypergraphs.
\end{theorem}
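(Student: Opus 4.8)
The plan is to mirror exactly the reduction used throughout this paper for the graph case, where the \emph{MIM} problem on a graph $G$ is nothing but the (cardinality) \emph{MWIS} problem on the square $L(G)^2$ of its line graph: a set of edges is an induced matching precisely when no two of them are at distance at most $2$ in $L(G)$, i.e. when it is an independent set in $L(G)^2$. First I would fix the corresponding notion for hypergraphs: a set $M \subseteq {\cal E}$ is an induced matching of $H$ if the hyperedges in $M$ are pairwise disjoint and no hyperedge of $H$ meets two of them, which is the same as saying that $M$ is an independent set in $L(H)^2$. With this convention the MIM problem on $H$ is literally the MWIS problem (with unit vertex weights) on $L(H)^2$, so it suffices to show that $L(H)^2$ lies in a class on which MWIS is tractable.

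The structural core is then immediate from the machinery already assembled. Since $H$ is $\alpha$-acyclic, Corollary~\ref{linegrofalphaacychyp}~(ii) tells us that its line graph $L(H)$ is dually chordal. Feeding this into Corollary~\ref{linegrofalphaacychyp}~(i) --- which asserts that the square of any dually chordal graph is chordal --- I obtain that $L(H)^2$ is chordal. This is the only place where $\alpha$-acyclicity is used, and it is exactly the step that confines $L(H)^2$ to a well-behaved graph class.

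To finish, I would invoke the fact that MWIS is solvable in (even linear) time on chordal graphs, for instance via Frank's Algorithm~\ref{algo:MWISchordal}. Running it on $L(H)^2$ with unit vertex weights returns a maximum independent set, which by the reduction of the first paragraph is a maximum induced matching of $H$. Since $L(H)$ and its square can be constructed from $H$ in polynomial time, the whole procedure is polynomial, which establishes the theorem.

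The only real point requiring care --- and hence the main obstacle --- is the first step: formalizing the hypergraph MIM as MWIS on $L(H)^2$ and checking that this characterization is faithful, that is, that every independent set in $L(H)^2$ is genuinely an induced matching in the intended hypergraph sense and conversely. Once that equivalence is pinned down, the remainder is a direct chain of already-proved implications, so the proof is expected to be as short as the analogous ED and EED arguments for dually chordal graphs.
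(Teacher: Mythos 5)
Your proposal is correct and follows essentially the same route as the paper: the paper's proof likewise applies Corollary~\ref{linegrofalphaacychyp}~(ii) to get that $L(H)$ is dually chordal, then Corollary~\ref{linegrofalphaacychyp}~(i) to conclude $L(H)^2$ is chordal, and reduces MIM to MWIS on chordal graphs. Your additional care in spelling out that an induced matching of $H$ is exactly an independent set of $L(H)^2$ is a point the paper leaves implicit (it states this identification explicitly only in the proof of Theorem~\ref{MIMhypertree}), but it is the same argument.
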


\begin{proof}
Let $H=(V,{\cal E})$ be $\alpha$-acyclic. Then by Corollary \ref{linegrofalphaacychyp} (ii), $L(H)$ is dually chordal. By Corollary \ref{linegrofalphaacychyp} (i), the square of a dually chordal graph is chordal, i.e., $L(H)^2$ is chordal.   
So the MIM problem for $\alpha$-acyclic hypergraphs can be solved by solving the MWIS problem for chordal graphs. 
\qed
\end{proof}

\smallskip

\begin{theorem}\label{MIMhypertree}
The MIM problem for hypertrees is \NP-complete. 
\end{theorem}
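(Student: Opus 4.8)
The plan is to prove NP-completeness by combining the duality machinery of the previous sections with a single gadget reduction. Membership in \NP\ is immediate, since a candidate induced matching of hyperedges can be verified in polynomial time. For hardness, I would first recast the problem as one about squares of chordal graphs. As noted before Corollary~\ref{EDalphaacyclic}, every chordal graph $G$ equals $2sec({\cal C}(G))$ for the $\alpha$-acyclic clique hypergraph ${\cal C}(G)$; dualizing and using Lemma~\ref{basicdualityfacts}~(ii) together with the duality of hypertrees and $\alpha$-acyclic hypergraphs, the hypergraph $H:={\cal C}(G)^*$ is a hypertree with $L(H)\cong G$. Since (as in the proof of Theorem~\ref{MIMalphaacyclic}) an induced matching of $H$ is exactly an independent set of $L(H)^2$, the MIM problem on hypertrees contains, as a special case, the computation of a maximum independent set in $G^2$ for an arbitrary chordal graph $G$. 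This does not collapse to a polynomial case: although $G$ is chordal, $G^2$ need not be, in contrast to the $\alpha$-acyclic situation, where $L(H)$ is dually chordal and hence $L(H)^2$ is chordal by Corollary~\ref{linegrofalphaacychyp}~(i).

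The heart of the argument is therefore to show that maximum independent set in $G^2$ is NP-hard for chordal $G$, which I would do by a reduction from {\sc Independent Set} on a graph $\Gamma=(V,E)$ using a split graph. Construct $G$ with vertex set $X\cup Y$, where $X=\{x_v\mid v\in V\}$ is independent, $Y=\{y_e\mid e\in E\}$ is a clique, and $x_vy_e\in E(G)$ iff $v\in e$. Split graphs are chordal, so $G$ is chordal. A short distance computation then gives the correspondence: $d_G(x_u,x_v)=2$ if $uv\in E$ and $d_G(x_u,x_v)=3$ otherwise, so that $G^2[X]\cong\Gamma$; moreover every $y_e$ lies within distance $2$ of all of $X$ and of all other $y_f$, hence is universal in $G^2$. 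Consequently any maximum independent set in $G^2$ avoids $Y$ and restricts to a maximum independent set of $\Gamma$, so a maximum independent set in $G^2$ has size $\alpha(\Gamma)$ (we may assume $\Gamma$ has no isolated vertices, so that every $y_e$ is indeed universal in $G^2$).

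Putting these together yields the reduction: from $\Gamma$ build the chordal split graph $G$, then the hypertree $H={\cal C}(G)^*$ with $L(H)\cong G$; a maximum induced matching of $H$ has size $\alpha(\Gamma)$. Both $G$ and $H$ have size polynomial in $|V|+|E|$, since a split graph has only linearly many maximal cliques, so the construction of $G$, of ${\cal C}(G)$, and of its dual is polynomial. Together with membership in \NP, this gives NP-completeness.

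The step I expect to be the main obstacle is the gadget reduction showing that MWIS on $G^2$ is hard for chordal $G$ — specifically, verifying that the split-graph construction realizes exactly the adjacencies of $\Gamma$ on $X$ inside $G^2$ while rendering the edge-vertices $Y$ useless (universal) in $G^2$. Once this distance-$2$ analysis is in place, the passage to hypertrees is routine, relying only on the already-established facts that clique hypergraphs are $\alpha$-acyclic, that line graphs equal $2$-sections of duals, and that MIM on a hypergraph is MWIS on the square of its line graph.
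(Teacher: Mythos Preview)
Your proposal is correct and follows essentially the same route as the paper: reduce MIM on hypertrees to maximum independent set on squares of chordal graphs via the duality $L({\cal C}(G)^*)\cong G$, and then show the latter is \NP-hard using a split-graph incidence gadget. The only difference is cosmetic: the paper attaches two extra vertices $f,g$ (with $f$ joined to the edge-clique and $g$ pendant on $f$), obtaining $\alpha(F^2)=\alpha(G)+1$, whereas you omit these and get $\alpha(G^2)=\alpha(\Gamma)$ directly after assuming $\Gamma$ has no isolated vertices; both versions work, and your explicit remarks on \NP-membership and on the polynomial size of ${\cal C}(G)^*$ for split graphs are points the paper leaves implicit.
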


\begin{proof}
Let $H=(V,{\cal E})$ be a hypertree. By definition, $M$ is an induced matching for $H$ if it is an independent node set for $L(H)^2$. By Theorem~\ref{hyptreechar}, for hypertrees $H$, $L(H)$ is chordal, and by the same argument as in the proof of Theorem \ref{EEDhypertree}, for every chordal graph $G$, there is a hypertree $H$ such that $G$ is isomorphic to $L(H)$. 

Thus, the MIM problem on hypertrees corresponds to the Maximum Independent Set problem on squares of chordal graphs. This problem, however, is \NP-complete as the following reduction shows: 
Let $G=(V,E)$ be any graph, and let $F=(V \cup E \cup \{f,g \}, E_F)$ be the following graph: Its node set consists of all vertices and edges of $G$ and additionally two new nodes $f$ and $g$. Its edge set $E_F$ consists of the edges of a clique $E \cup \{f\}$ and a single edge $fg$, and all edges between any edge $e=xy \in E$ and its vertices $x$, $y$ respectively. Thus, $F$ is a split graph and therefore, $F$ is chordal. In $G'=F^2$, the vertex set $V$ of $G$ induces a subgraph isomorphic to $G$. Let $\alpha(G)$ denote the maximum size of an independent vertex set in $G$. We claim: 
\begin{equation}\label{squarechordalalpha}
\alpha(G') = \alpha(G)+1.  
\end{equation}

\noindent
{\em Proof of} (\ref{squarechordalalpha}): 
First assume that $U$ is an independent vertex set in $G$ with $|U| = \alpha(G)$. Then obviously, $U'=U \cup \{g\}$ is an independent node set in $G'$ with $|U'| = \alpha(G)+1$. Moreover, $\alpha(G') \le \alpha(G)+1$ since $E \cup \{f,g \}$ is a clique in $G'$.

Conversely assume that $U'$ is an independent node set in $G'$ with $|U'| = \alpha(G)+1$. Since $E \cup \{f,g \}$ is a clique in $G'$, $U'$ can contain at most one node of $E \cup \{f,g \}$, i.e., $|U' \cap V| = \alpha(G)$, and obviously, $U' \cap V$ is an independent vertex set in $G$ which shows the claim.  

This reduction shows Theorem \ref{MIMhypertree}.
\qed
\end{proof}

\medskip

The proof of Theorem \ref{MIMhypertree} also shows that the Maximum Independent Set problem is \NP-complete for squares of chordal and split graphs. For squares of chordal graphs this was already mentioned without proof in \cite{KanjKratsch2009}.

\begin{theorem}\label{XChypertree}
The Exact Cover problem is \NP-complete for $\alpha$-acyclic hypergraphs and solvable in polynomial time for hypertrees. 
\end{theorem}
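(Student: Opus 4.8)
The plan is to reduce Exact Cover to a weighted Maximum Weight Independent Set problem on the line graph $L(H)$, in exact analogy with Lemma~\ref{mainequived} and Corollary~\ref{maincor} (which reduced ED to MWIS on $G^2$). First I would establish the following weighted reformulation. For a hypergraph $H=(V,{\cal E})$ with no empty hyperedges, assign to each vertex $e$ of $L(H)$ the weight $\omega(e):=|e|$. A family ${\cal E}'\subseteq{\cal E}$ is independent in $L(H)$ exactly when its members are pairwise disjoint, and for such a family $\omega({\cal E}')=|\bigcup{\cal E}'|\le|V|$, with equality iff ${\cal E}'$ covers $V$. Hence $H$ has an exact cover if and only if the maximum weight of an independent set of $L(H)$ equals $|V|$, and any MWIS of weight $|V|$ is itself an exact cover. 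This is the Exact-Cover analogue of Lemma~\ref{mainequived}, now living on $L(H)$ with weights $|e|$ rather than on $G^2$ with weights $|N[v]|$.

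For the polynomial part (hypertrees) I would simply combine this reformulation with Theorem~\ref{hyptreechar}: if $H$ is a hypertree then $L(H)$ is chordal, so running the linear-time procedure of Algorithm~\ref{algo:MWISchordal} on $L(H)$ with weights $|e|$ and testing whether the optimum equals $|V|$ decides Exact Cover in polynomial (in fact linear) time, returning an exact cover when one exists.

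For NP-completeness on $\alpha$-acyclic hypergraphs, membership in NP is immediate (guess ${\cal E}'$ and check disjointness and coverage). For hardness I would reduce from X3C (Exact Cover with $3$-element hyperedges, NP-complete, cf. Section~\ref{sec:basicnotions}). The tempting shortcut---that ED on a graph $G$ equals Exact Cover on ${\cal N}(G)$ and that ED is NP-complete on chordal graphs (Theorem~\ref{theo:EDchordalNPc})---fails directly: since $L({\cal N}(G))\cong G^2$ by~(\ref{G2LNG}) and ${\cal N}(G)^*={\cal N}(G)$, we have $2sec({\cal N}(G))\cong G^2$, so by Corollary~\ref{alphaacychypgrchar} the hypergraph ${\cal N}(G)$ is $\alpha$-acyclic precisely when $G$ is dually chordal, where ED is already polynomial; and forcing $\alpha$-acyclicity by adding a universal vertex collapses every exact cover to a single hyperedge. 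I would therefore give an explicit gadget construction turning an arbitrary X3C instance $(U,{\cal F})$ into an $\alpha$-acyclic hypergraph $H$ whose exact covers biject with those of $(U,{\cal F})$: keep one selectable hyperedge $\hat F\supseteq F$ for each $F\in{\cal F}$ and add auxiliary vertices together with auxiliary covering hyperedges, so that---via Corollary~\ref{alphaacychypgrchar}---the $2$-section becomes chordal and the hypergraph becomes conformal.

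The hard part will be this last construction: I must enforce conformality and chordality of the $2$-section (Corollary~\ref{alphaacychypgrchar}) \emph{simultaneously}, while padding the $\hat F$ to keep hyperedge sizes under control and while guaranteeing that the auxiliary vertices and hyperedges create no spurious exact cover. Equivalently, using the dual identity that Exact Cover on $H$ is the exact-hitting problem on $H^{*}$ and that $H$ is $\alpha$-acyclic iff $H^{*}$ is a hypertree (cf.\ Theorem~\ref{hyptreechar} and Lemma~\ref{basicdualityfacts}), it suffices to present the reduction as an exact-hitting instance on a hypertree. I expect the cleanest verification to route through this dual host-tree description, checking that each constructed hyperedge induces a subtree and that the only exact hitting sets are exactly the images of X3C solutions.
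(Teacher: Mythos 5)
Your treatment of the hypertree case is correct and is exactly the paper's argument: weight each hyperedge $e$ by $|e|$ in $L(H)$, note that an independent set of $L(H)$ is a pairwise disjoint family whose weight is at most $|V|$ with equality precisely when it covers $V$, invoke Theorem~\ref{hyptreechar} to get chordality of $L(H)$, and run Algorithm~\ref{algo:MWISchordal}. Membership in \NP{} is likewise fine.

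The gap is in the hardness part. You correctly rule out the ED-on-chordal-graphs shortcut and observe that merely adding a hyperedge covering all of $V$ ``collapses every exact cover to a single hyperedge,'' but you then stop at a promissory note: an unspecified gadget that must simultaneously enforce conformality, chordality of the $2$-section, controlled hyperedge sizes, and a bijection of exact covers --- which you yourself flag as ``the hard part.'' That construction is the entire content of the hardness claim, so as written the proof is incomplete. The fix is one step past the idea you discarded, and the paper's reduction is from \emph{general} Exact Cover rather than X3C: given $H=(V,{\cal E})$, set $H'=(V\cup\{u,v\},\,{\cal E}\cup\{g,k\})$ with $g=V\cup\{u\}$ and $k=\{u,v\}$. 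The big hyperedge $g$ makes $2sec(H')$ a clique on $V\cup\{u\}$ with $v$ pendant, hence chordal, and every clique of $2sec(H')$ is contained in $g$ or $k$, so $H'$ is conformal and therefore $\alpha$-acyclic by Corollary~\ref{alphaacychypgrchar}. The pendant hyperedge $k$ is the blocker you were missing: $v$ lies only in $k$, so every exact cover of $H'$ must contain $k$, hence cannot contain $g$ (otherwise $u$ is covered twice), hence restricts to an exact cover of $H$; conversely $C\cup\{k\}$ is an exact cover of $H'$ for any exact cover $C$ of $H$. So the covers do not collapse, and no padding, no size control, and no detour through the dual hypertree are needed.
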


\begin{proof}
\emph{Exact Cover for $\alpha$-acyclic hypergraphs.} Let $H=(V,{\cal E})$ be an arbitrary hypergraph. We construct $H'=(V',{\cal E}')$ as follows: $V'=V\cup \{u,v\}$ and ${\cal E}'={\cal E}\cup \{g,k\}$ with $g=V\cup \{u\}$ and $k=\{u,v\}$. $H'$ is $\alpha$-acyclic. Now we show that $H$ has an exact cover if and only if $H'$ has one.

$\Rightarrow$: Let $C \subseteq \cal E$ be an exact cover for $H$. Then $C'=C \cup \{k \}$ is an exact cover for $H'$.

$\Leftarrow$: Let $C' \subseteq \cal E'$ be an exact cover for $H'$. The vertex~$v$ is only included in $k$. Therefore $k \notin C'$, so $g$ can not be in $C'$ (otherwise $u$ is covered twice). Thus, $C= C' \setminus \{k\}$ is an exact cover for $H'$.

\medskip

\emph{Exact Cover for hypertrees.} For a hypertree $H=(V,{\cal E})$, let $L(H)=({\cal E},E)$ its line graph and $\omega$ a weight function for ${\cal E}$ with $\omega(e)=|e|$. It is easy to see, that $C\subseteq {\cal E}$ is an exact cover for $H$ if and only if $C$ is an maximum weight independent set for $L(H)$ with $\sum_{e \in C} \omega(e)=|V|$. Because $L(H)$ is chordal, the Exact Cover problem can be solved for hypertrees in polynomial time by using Algorithm~\ref{algo:MWISchordal}.
\qed
\end{proof}

Because for each hypergraph $H$ an edge set $C\subseteq {\cal E}$ is an exact cover for $H$ if and only if $C$ is an maximum weight independent set for $L(H)$, the Exact Cover problem can be solved in in polynomial time if the MWIS problem is polynomial time solvable for the line graph of $H$.

\section{Conclusion}

%In this paper we have investigated the complexity of the Efficient Domination (ED) problem and of the Efficient Edge Domination (EED) problem on various classes of graphs in a systematic way using a connection to the Maximum Weight Independent Set problem on squares of certain graphs, and we have extended these problems to hypergraphs. In particular, we focus on the duality of the Efficient Domination (ED) problem and the Efficient Edge Domination (EED) problem; while ED is \NP-complete on chordal graphs (on $\alpha$-acyclic hypergraphs, respectively) and linear on dually chordal graphs (polynomial on hypertrees, respectively), the EED problem is linear on chordal graphs (polynomial on $\alpha$-acyclic hypergraphs, respectively), and linear on dually chordal graphs while it is \NP-complete on hypertrees.  
%The closely related Maximum Induced Matching (MIM) problem is known to be polynomial on chordal graphs, and it is polynomial on $\alpha$-acyclic hypergraphs while it is \NP-complete on dually chordal graphs and on hypertrees. Both problems - ED and EED - are solvable in polynomial time on AT-free graphs. Note that the Independent Domination problem (where one looks for an independent vertex set in a given graph which is dominating in the same graph, thus also combining a packing and a covering problem) is \NP-complete for dually chordal graphs while some other variants of the domination problem are efficiently solvable on dually chordal graphs \cite{BraCheDra1998}. For chordal graphs, the independent domination problem is solvable in linear time \cite{Farbe1982}. 

The subsequent scheme summarizes some of our results; $\NP$-c. means $\NP$-complete, pol. (linear) means polynomial-time (linear-time) solvable, and XC means the Exact Cover problem. 

\begin{displaymath}
\begin{array}{|c||c|c||c|c|}
\hline
 & \mbox{ chordal gr. } & \mbox{ dually chordal gr. } & \mbox{ $\alpha$-acyclic hypergr. } & \mbox{ hypertrees }\\
\hline
\hline
\mbox{ ED } & \mbox{ $\NP$-c. \cite{YenLee1996} } & \mbox{ linear } & \mbox{ $\NP$-c. } & \mbox{ pol. }\\
\hline
\mbox{ EED } & \mbox{ linear \cite{LuKoTan2002} } & \mbox{ linear } & \mbox{ pol. } & \mbox{ $\NP$-c. }\\
\hline
\mbox{ MIM }  & \mbox{ pol. \cite{Camer1989} } & \mbox{ $\NP$-c. } & \mbox{ pol. } & \mbox{ $\NP$-c. }\\
\hline
\mbox{ XC }  & \mbox{ } & \mbox{ } & \mbox{ $\NP$-c. } & \mbox{ pol. }\\
\hline
\end{array}
\end{displaymath}

%In \cite{CarKorLoz2011}, it is mentioned that the EED problem is expressible in a certain kind of Monadic Second Order Logic, and in \cite{CouMakRot2000}, it was shown that such problems can be solved in linear time on any class of bounded clique-width assuming that the clique-width expressions are given or can be determined in the same time bound. The same remark holds for the ED and the MIM problem. Note that graphs having an e.e.d. set do not contain some forbidden subgraphs such as $K_4$ and gem. This leads to bounded clique-width in some cases.

%\medskip

\noindent
{\bf Acknowledgement.}

The first author is grateful to J. Mark Keil and Haiko M\"uller for stimulating discussions and related results.

\begin{footnotesize}

\end{footnotesize}
\end{document}